\newtheorem{theorem}{Theorem}%[section]
\newtheorem{lemma}{Lemma}
\date{April 10, 2018}
\newtheorem*{theorem*}{Theorem}{\bf}{\it}
\newtheorem*{proposition*}{Proposition}{\bf}{\it}
\newcommand{\tP}{{\tilde P}}
\newcommand{\tQ}{{\tilde Q}}
\begin{document}
%%% remove comment delimiter (W%W) and select language if required
%\selectlanguage{English}
\title[Search for cycles]{Search for cycles in non-linear autonomous discrete dynamical systems}
%Restrictions applicability of linear control with the delay feedback nonlinear discrete systems}
%Fourier Series and Nonlinear Dynamics}
%\thanks{This work was supported by the Society for Industrial and
% Applied Mathematics, Philadelphia, Pennsylvania.}}

% The thanks line in the title should be filled in if there is
% any support acknowledgement for the overall work to be included
% This \thanks is also used for the received by date info, but
% authors are not expected to provide this.

\author{D. Dmitrishin,  A. Stokolos and M. Tohaneanu}
\maketitle
%\begin{keywords} 
%chaos, DFC methods, trigonometric polynomials
%\end{keywords}
%\begin{AMS}
%15A15, 15A09, 15A23
%\end{AMS}

\begin{abstract} We construct a family of polynomials with real coefficients that contains as a particular case the Fej\'er and Suffridge polynomials. These polynomials allow us to suggest a robust algorithm to search for cycles of arbitrary length in non-linear autonomous discrete dynamical systems. Numeric examples are included.
\end{abstract}

\section{Introduction}

\subsection{Settings}
Consider the discrete dynamical system 
\begin{equation}\label{system} 
x_{n+1}=f(x_n),\qquad  f: A\to A, \; A\subset \mathbb R^m
\end{equation} 
where $A$ is a convex set that is invariant under $f$. Let us assume that the system has an unstable  T-cycle  $(x_1^*,...,x_T^*)$. We define the cycle multipliers $\mu_1,...,\mu_m$ as the zeros of the characteristic polynomial
\begin{equation} \label{cheq}
{\rm det}\left( \mu I-\prod_{j=1}^T {  D}f (x_j^*) \right)=0.
\end{equation} 
We will assume that the multipliers are located in a region $M\subset \mathbb C.$ If all the multipliers are inside the unit disc $\mathbb D = \{|z|<1\}$ in the complex plane, then the cycle is locally asymptotically stable. If not, an infinitesimal perturbation of the cycle values can lead to  behavior called ``deterministic chaos" or just ``chaos."  The term was coined by J.A.~Yorke and T.Y.~Li  in the paper ``Period Three Implies Chaos" (1975) \cite{LY} {in which it was proved}  that any one-dimensional system which exhibits a regular cycle of period three will also display regular cycles of every other length, as well as completely chaotic cycles. The famous Sharkovsky's theorem \cite{Sha}  includes this result as a special case (cf.\cite{BB}, p. 79).

Various methods of chaos control have been developed since, starting with the groundbreaking work of Edward Ott, Celso Grebogi and James A. Yorke \cite{OGY}  (1990). In \cite{OGY} the authors suggest a method to stabilize chaos by making only small time-dependent perturbations of an available system parameter. They also show that small time dependent changes in the control parameters of a nonlinear system can turn a previously chaotic trajectory into a stable, periodic motion.

The next step forward was done by Kestutis Pyragas \cite{P} in 1992. He suggested a very simple linear scheme 
$f(x_{n+1})=f(x_n)+K(x_n-x_{n-1}).$ The Pyragas method has turned out to be very popular because it is easy to implement experimentally. It has been used in a large variety of systems in physics, chemistry, biology, medicine, and engineering \cite{SS, P1,G} .

In 1996 Toshimitsu Ushio \cite{U} showed that the Pyragas method has several serious limitations. In particular it was shown  that the admissible region of the multiplier  is (-3,1), so it does not work for the whole range of negative values as one would like. Moreover, it was shown in \cite{DKS} that going deeper in the prehistory by adding more delays does not improve the situation.  If two multipliers of the system belong to the same connected component of the region of stability, then the distance between them will be at most 4, regardless of the number of delays.

In 1996 M. de Sousa Vieira and A.J. Lichtenberg \cite{VL} suggested a non-linear counterpart of the Pyragas method.  In their ``Conclusion and discussion" section they wrote ``The generalization consists of feeding back the nonlinear mapping signal rather than a signal linearized around the fixed point. This increases the basin of attraction of the controlled signal and thus decreases the sensitivity to noise. {\it However, the range of parameters for which control can be achieved is limited}."

Further work on the stabilization of cycles was done by J. E. S. Socolar, D. W. Sukow, and D. J. Gauthier \cite{SSG} and \"O.Morgul \cite{M,M1}.  In particular, the latter papers considered the problem of finding 2,4,5 and 6-cycles of the logistic map $f(x) =\mu x(1 - x)$ for various values of $\mu.$ 

The goal of this paper is to provide a robust method to stabilize cycles of any length whose multipliers lie in the region $(-\infty, 1)$. We will be looking at the non-linear control \eqref{ave} and find coefficients (that we conjecture to be optimal) that allow us to stabilize the cycle in polynomial time. Sections 2-4 provide a preliminary discussion and setup of the problem. In Section 5 we discuss the cases $T=1$ and $T=2$, which have been rigorously studied by the authors and collaborators in \cite{DH}, \cite{DKKS}. Sections 6-8 are devoted to discussing our choice of coefficients, and providing conjectures and experimental evidence of why we believe these conjectures to hold. In Section 9 we rigorously prove asymptotic  bounds for the size of the multipliers that can be stabilized for a cycle of length $T$ if we allow the length of prehistory $N$ to go to infinity. More precisely, we obtain a polynomial-type bound of approximately $N^{2}$ for the size of the multipliers, which is important in applications (as opposed to, say, an exponential bound). Finally, Section 10 is dedicated to numerical examples of how our method can be applied to well-known dynamical systems.

\section{Average system}

The standard approach developed in Analysis to suppress oscillations is averaging. Let us apply this idea to stabilize unstable $T$-cycles, i.e. given a range for the multipliers $\mu$ we want to stabilize the cycle by the following averaging procedure
\begin{equation}\label{ave}
x_{n+1}=\sum_{k=1}^Na_k f (x_{n-kT+T}),\qquad \sum_{k=1}^Na_k=1. 
\end{equation}
Note that the system \eqref{ave} preserves  the convex invariant set and the T-cycles of the system \eqref{system}, while offering great flexibility since we can choose the coefficients $a_j$.\\ 

Two natural questions arise:  {\it Can stabilization be obtained with a bounded $N$ - the depth of prehistory? If so, what is the minimal depth necessary ?} \\

\section{Stability analysis} 

The characteristic equation for the system \eqref{ave}  is
$$
\prod_{j=1}^m\left[\lambda^{T(N-1)+1}-\mu_j\left(\sum_{k=1}^N a_k\lambda^{N-k}\right)^T\right]=0,\quad \mu_j\in M,\; j=1,...,m.
$$
The proof for the scalar case $m=1$ is in \cite{DHKS}, and for the vector case is in \cite{Kh}. The form of the polynomial allows one to establish a nice geometric criterion (c.f. \cite[5.1]{DKST}) that was suggested  by Alexei Solyanik  \cite{Sol}.

\begin{lemma}\label{l1} 
The characteristic polynomial of the system \eqref{ave} has all the roots inside the unit disc as long as the reciprocal values of the multipliers are outside the image of the unit disc under the polynomial map 
$F_T(z)=z(a_1+...+a_Nz^{N-1})^T,$ i.e.
$$
\frac1{\mu_j}\not\in F_T(\mathbb D),\quad j=1,...,N,
$$
or
$$
\mu_j\in( \bar{\mathbb C}\backslash F_T(\mathbb D))^*,
$$
where $z^*=1/{\bar z}.$
\end{lemma}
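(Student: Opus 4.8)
The plan is to exploit the special structure of the characteristic polynomial: it factors over the multipliers, and each factor has the ``monomial $=\mu_j\times$ polynomial'' shape, so a single reciprocal substitution $\lambda\mapsto 1/\lambda$ converts the disc problem into its mirror image. First I would note that since the characteristic polynomial is $\prod_j g_j(\lambda)$ with $g_j(\lambda)=\lambda^{T(N-1)+1}-\mu_j P(\lambda)^T$ and $P(\lambda)=\sum_{k=1}^N a_k\lambda^{N-k}$, it suffices to locate the zeros of each factor $g_j$ separately; all roots lie in $\mathbb D$ if and only if every $g_j$ has all its roots in $\mathbb D$.

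The key algebraic step is a coefficient-reversal identity. Writing $Q(z)=\sum_{k=1}^N a_k z^{k-1}=a_1+\cdots+a_N z^{N-1}$, so that $F_T(z)=zQ(z)^T$, one checks that $P(\lambda)=\lambda^{N-1}Q(1/\lambda)$ for $\lambda\neq0$. Substituting and clearing powers of $\lambda$ yields, with $w=1/\lambda$ and $d=T(N-1)+1$, the identity
$$
g_j(1/w)=\frac{-\mu_j}{w^{d}}\bigl(F_T(w)-1/\mu_j\bigr).
$$
Since $g_j(0)=-\mu_j a_N^T\neq0$ (assuming $a_N\neq0$ and $\mu_j$ finite and nonzero), $g_j$ has no zero at the origin, and $F_T(w)-1/\mu_j$ has none either; the degrees agree, $\deg g_j=\deg(F_T-1/\mu_j)=d$. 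Hence $w\mapsto 1/w$ is a multiplicity-preserving bijection between the zeros of $g_j$ and those of $F_T(w)-1/\mu_j$.

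From this correspondence the root-location statement is immediate: a zero $\lambda$ of $g_j$ with $|\lambda|\ge1$ corresponds to a zero $w=1/\lambda$ of $F_T(w)-1/\mu_j$ with $|w|\le1$, i.e.\ to a point $w\in\overline{\mathbb D}$ with $F_T(w)=1/\mu_j$. Therefore every zero of $g_j$ lies in the open disc precisely when $1/\mu_j$ is omitted by $F_T$ on $\overline{\mathbb D}$, which is the asserted criterion $1/\mu_j\notin F_T(\mathbb D)$ for the exterior part $|\lambda|>1$. I would then pass to the second formulation $\mu_j\in(\bar{\mathbb C}\setminus F_T(\mathbb D))^*$ by noting that $z\mapsto z^*=1/\bar z$ is an involution and that, because the $a_k$ are real, $F_T$ commutes with complex conjugation; hence $F_T(\mathbb D)$ and its complement are symmetric about the real axis, which renders the conjugation implicit in $*$ harmless and makes the two formulations equivalent.

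The main obstacle I anticipate is the behaviour on the unit circle itself: the clean equivalence produced by the substitution is with the \emph{closed} disc $\overline{\mathbb D}$, so ruling out roots with $|\lambda|=1$ requires $1/\mu_j\notin F_T(\partial\mathbb D)\subset\overline{F_T(\mathbb D)}$, which is slightly stronger than the open-disc statement as literally written. I would handle this either by phrasing the criterion with the closure $F_T(\overline{\mathbb D})=\overline{F_T(\mathbb D)}$, or by reading the omitted-value condition as referring to that closure. A secondary point to verify is the degenerate bookkeeping (vanishing leading or trailing coefficients, multipliers at $0$ or $\infty$), which affects only the degree count and is absorbed by the standing nondegeneracy assumptions.
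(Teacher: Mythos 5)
Your proof is correct: the reciprocal substitution $\lambda\mapsto 1/w$, which turns each factor $\lambda^{T(N-1)+1}-\mu_j P(\lambda)^T$ into $-\mu_j w^{-d}\bigl(F_T(w)-1/\mu_j\bigr)$ with $d=T(N-1)+1$, is exactly the mechanism behind this geometric criterion, which the paper does not prove itself but attributes to \cite{Sol} and \cite[5.1]{DKST}. Your caveat about the closed versus open disc is also consistent with the paper's own remark (Section 7) that the multiplier interval may touch the boundary of $(\bar{\mathbb C}\setminus F_T(\mathbb D))^*$, which the authors handle by the perturbation $F_T^{\epsilon}=(F_T(z)+\epsilon z)/(1+\epsilon)$.
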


Note that in the case $N=1$ one gets $F_T(z)=z$ and $F_T(\mathbb D)=\mathbb D$ and thus $( \bar{\mathbb C}\backslash F_T(\mathbb D))^*=\mathbb D$ which is the standard criterion for stability in the open-loop system \eqref{system}

\section{Optimization problem}

In this paper we will assume that the multipliers lie on the half-axis $(-\infty, 1)$. In this case the problem of stabilization can be reduced to the following optimization problem: find
$$
I^{(T)}_N=\sup_{\sum_{j=1}^{N}a_j=1}\min_{t\in[0,\pi]}\left\{\Re\left(F_T(e^{it}) \right): \Im\left(F_T(e^{it}) \right)=0 \right\}.
$$

Lemma~\ref{l1} now implies that  for the system \eqref{ave} a robust stabilization (i.e. by the same control for all $\mu_j\in(-\mu^*,1)$) of any $T$-cycle is  possible if 
\begin{equation}\label{mu}
(\mu^*)\cdot|I_N^{(T)}|\le1.
\end{equation}

We are left with the task of finding the polynomials that solve the optimization problem, and estimate $|I^{(T)}_N|.$ 

\section{Case $T=1,2$ - Suffridge polynomials}
For $T=1,2$ the optimization problems were solved in \cite{DH} and \cite{DKKS} by means of Harmonic Analysis. Namely, for $T=1$
\begin{equation}\label{I1N}
|I_N^{(1)}|=\inf_{\sum_{j=1}^{N}a_j=1}\max_{t\in[0,\pi]}\left\{-\sum_{j=1}^{N}a_j\cos jt:
\sum_{j=1}^{N}a_j\sin jt=0 \right\}=\tan^2\frac\pi{2(N+1)} \sim \frac{\pi^2}{4N^2}.%,\; N\to\infty %\leqno{\mbox{\bf Case T=1:}}
\end{equation}
The optimal coefficients are the coefficients of Suffridge polynomials (see \cite{S})
\begin{equation}\label{a1}
a_j^{(1)}=A_N\left(1-\frac j{N+1}\right)\sin\frac{\pi j}{N+1},\quad
A_N=2\tan\frac\pi{2(N+1)},\; j=1,...,N.
\end{equation}
Similarly for $T=2$ one has
\begin{equation}\label{I2N}
|I^{(2)}_N|=\inf_{\sum_{j=1}^{N}a_j=1}\max_{t\in[0,\pi]}
\left(
-\sum_{j=1}^{N}a_j\sin(2j-1)t: \sum_{j=1}^{N}a_j\cos(2j-1)t=0 \right)^2=\frac1{N^2}.%\leqno{\mbox{\bf Case T=2:}}
\end{equation}
The optimal coefficients are odd coefficients of Fej\'er polynomials
\begin{equation}\label{a2}
a_j^{(2)}=A_N\left(1-\frac {2j-1}{2N}\right),\qquad A_N=\frac2N,\; j=1,...,N.
\end{equation}
The images $F_T(e^{it})$ of the unit circle under the optimal polynomial maps $F_T(z)$ look very similar for $T=1$ and for $T=2$ (see Fig.\ref{T1N5}  and Fig.\ref{T2N6}).  The only major difference between the two pictures is the behavior when $t=\pi$: for $N$-even the graph is tangent to the x-axis there, which does not happen for $N$-odd.

\begin{figure}[!htbp]
\centering
\begin{minipage}[b]{0.55\linewidth}
\includegraphics[scale=0.25]{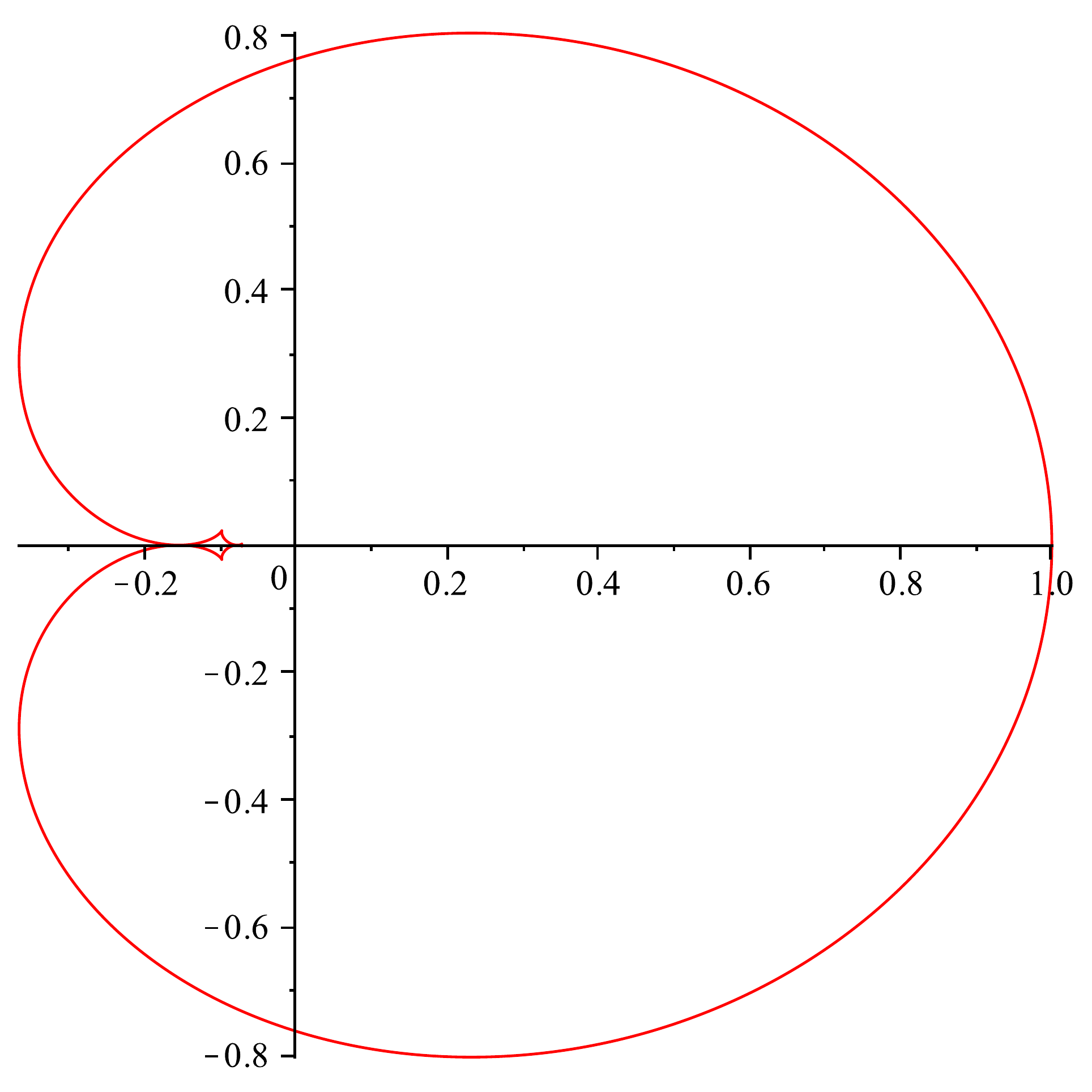}
\caption{$F_T(e^{it})$ for T=1 and N=5}
\label{T1N5}
\end{minipage}%\hspace{.1cm}
\begin{minipage}[b]{0.35\linewidth}
\includegraphics[scale=0.25]{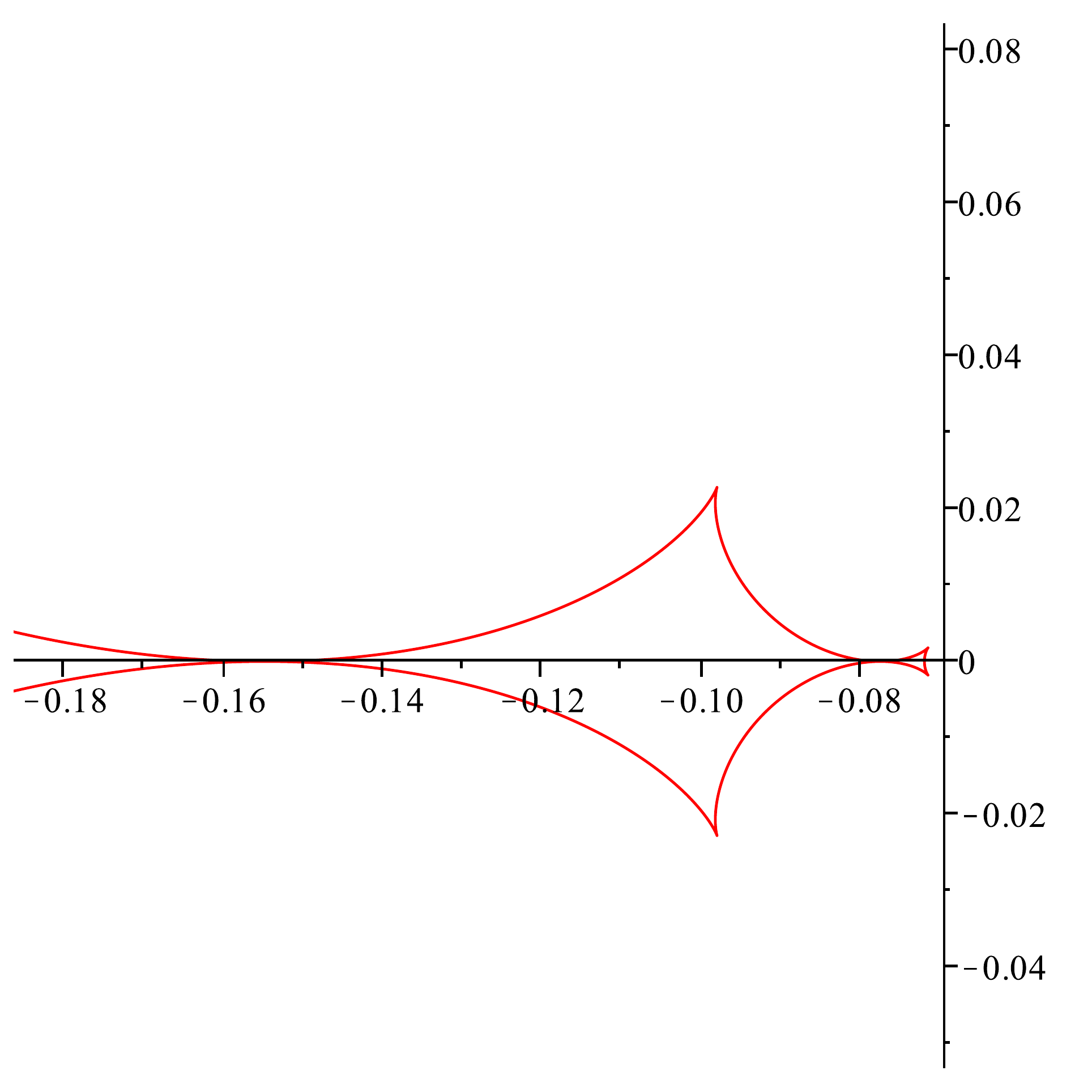}
%\vspace{-9cm}
\caption{Fragment}
\label{T1N5F}
\end{minipage}
\end{figure}

\begin{figure}[!htbp]
\centering
\begin{minipage}[b]{0.55\linewidth}
\includegraphics[scale=0.25]{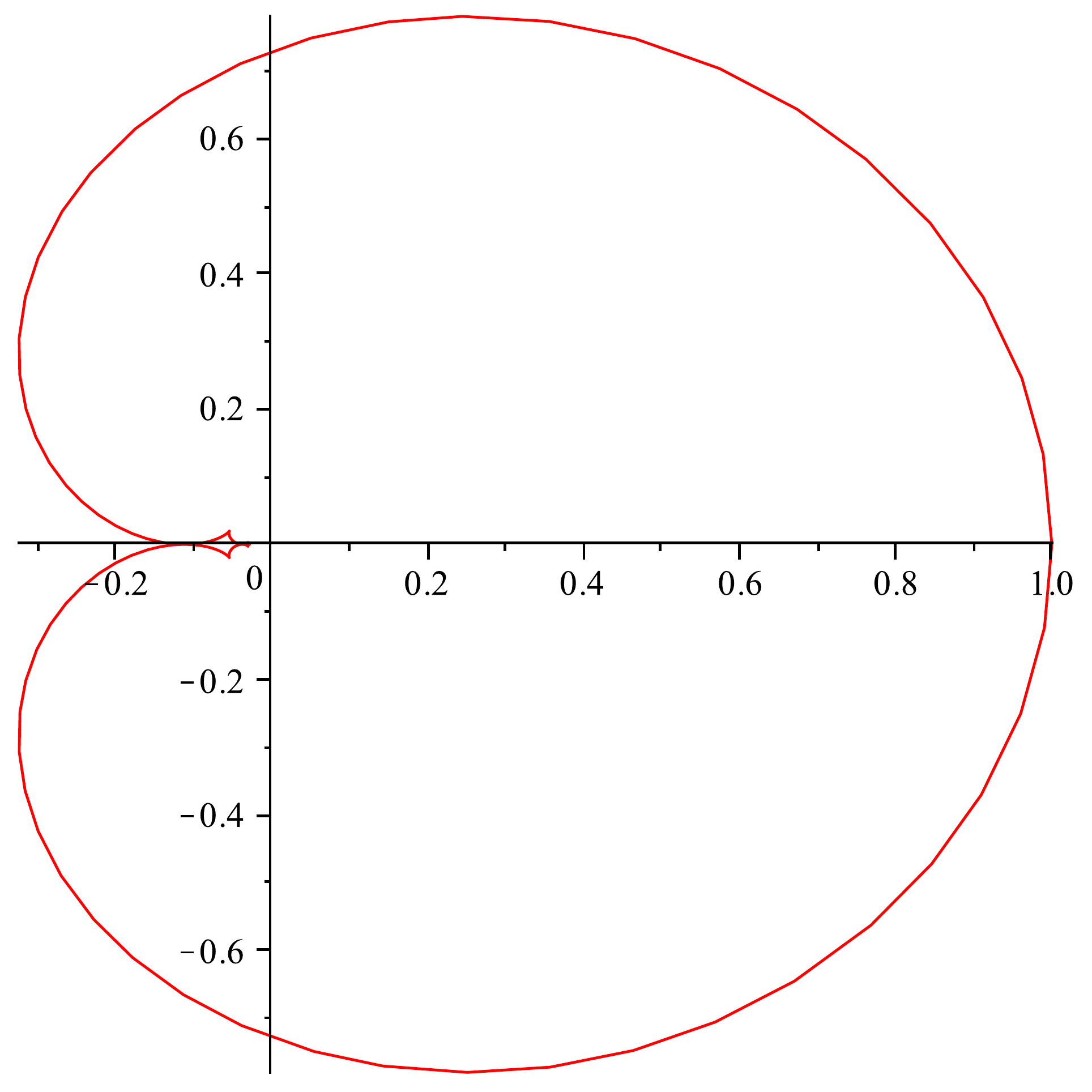}
%\vspace{-9cm}
\caption{$F_T(e^{it})$ for T=2 and N=6}
\label{T2N6}
\end{minipage}%\hspace{.1cm}
\begin{minipage}[b]{0.35\linewidth}
\includegraphics[scale=0.25]{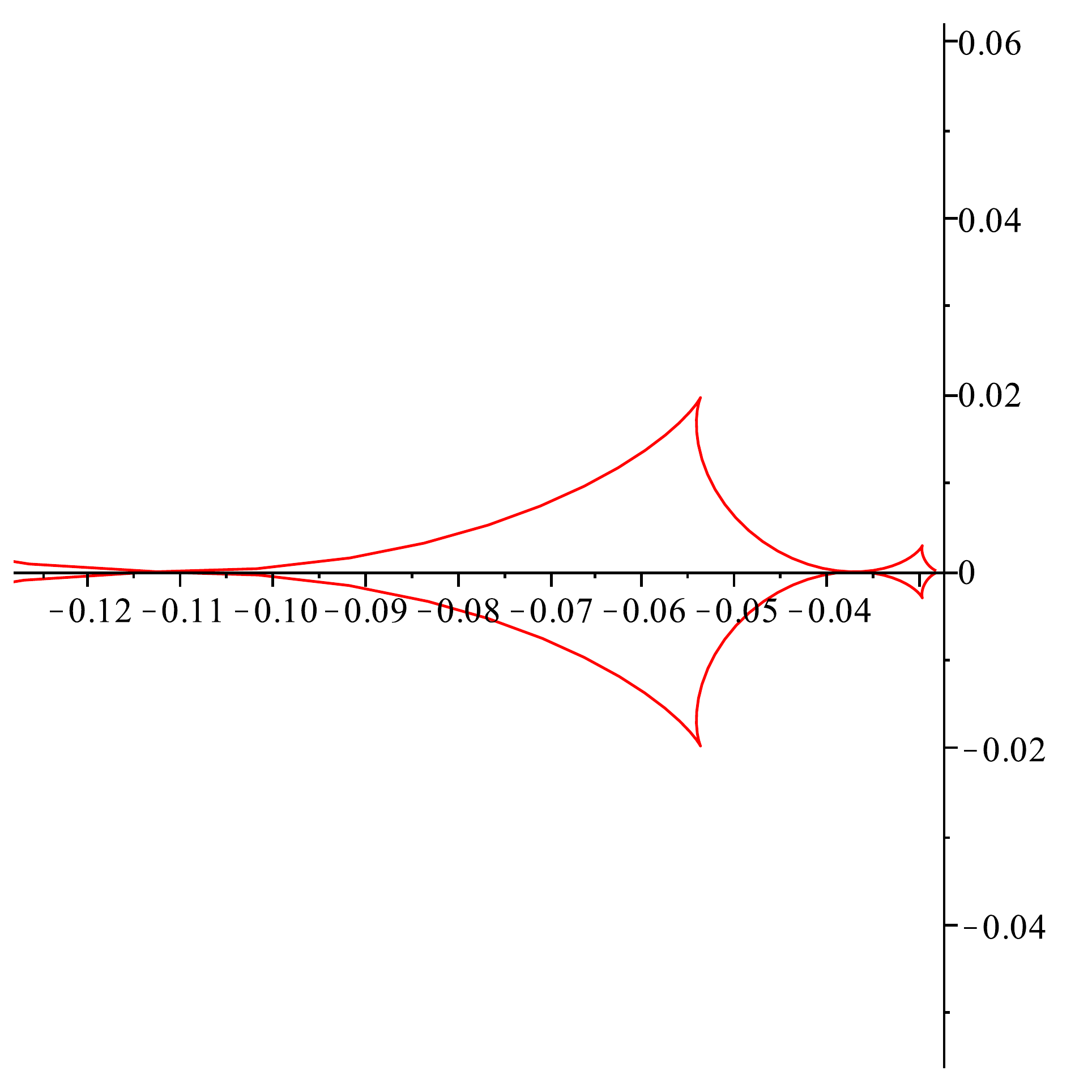}
%\vspace{-9cm}
\caption{Fragment}
\label{T2N6F}
\end{minipage}
\end{figure}

\section{Case $T\ge3$ - generalized Suffridge polynomials}
In \cite{DKST} more general family of polynomials were introduced. Those polynomials generalize Suffridge polynomials for the case $T\ge3$ and turn into the polynomials from the above section if $T=1,2.$ In this section we provide the explicit formulas for these polynomials.\\

Let $q(z):= a_1+...+a_Nz^{N-1}$. Define the set of points 
$$
\psi_j=\frac{\pi(2+T(2j-1))}{2+(N-1)T},
\quad j=1,.., \frac N2 \;\mbox{(N-even)}, \;
\left(\frac{N-1}2\;\mbox{(N-odd)}\right)
$$
and the generating polynomials
$$
\eta(z)=z(z+1)\prod_{j=1}^{\frac{N-2}2}(z-e^{i\psi_j})(z-e^{-i\psi_j}),\quad\mbox{N-even}
$$
$$
 \eta(z)=z\prod_{j=1}^{\frac{N-1}2}(z-e^{i\psi_j})(z-e^{-i\psi_j}),\quad \mbox{N-odd}.
$$
We now let
\begin{equation}\label{qz}
q(z)=\frac{KT}{2+(N-1)T}\left( \left( \frac1T+N \right)\frac{\eta(z)}z-\eta^\prime(z) \right),
\end{equation}
where $K$ is a normalizing factor so that $q(1)=1.$ A direct computation shows that
$$
\frac1K=2^{\frac{N-2}{2}}\prod_{j=1}^{\frac{N-2}{2}}(1-\cos\psi_j),\qquad\mbox{$N$ is even,}
$$
$$
\frac1K=2^{\frac{N-3}{2}}\prod_{j=1}^{\frac{N-1}{2}}(1-\cos\psi_j),\qquad\mbox{$N$ is odd,}
$$
as well as 
\begin{equation}\label{PN}
q(-1)=P_N
\end{equation}
 where
$$
P_N=  \frac T{2+(N-1)T}\prod_{j=1}^{\frac{N-2}2}\cot^2\frac{\psi_j}2,\; N \; \mbox{is even};\quad
P_N=\prod_{j=1}^{\frac{N-1}{2}}\cot^2\frac{\psi_j}2, \; N \; \mbox{is odd}.
$$
In order to compute the coefficients $a_j$ we write $\eta(z)$ in the standard form 
$$\eta(z)=z\sum_{j=1}^Nc_jz^{j-1}$$
We then have
\begin{equation}\label{ac}
a_j=\left(1-\frac{1+(j-1)T}{2+(N-1)T}\right)c_j.
\end{equation}

\section{Experimental evidence}

Using MAPLE  we plotted  the image of the unit disc $\mathbb D$ under the polynomial map $F_T(z)=z(q(z))^T$ for various combinations of $T$ and $N,$ and all the plots look remarkably similar.  Below we provide the images for a small value $N=5$ and two different values of $T$, a small one ($T=5$) and a large one ($T=1005$), see Fig~\ref{T5N5} and Fig~\ref{T1005N5}. We chose $N=5$ to be able to better observe the behavior of the curve $F_T(e^{it})$  near the point $t=\pi,$ as for large $N$ this is harder. 

A typical case of the plot is in Fig.~\ref{T5N5} below. One can observe several cusps, which suggests that the roots of the derivatives of the polynomials $F_T(z)$ are on the boundary of the unit disc $\mathbb D$, which was a crucial step in proving univalency of Suffridge polynomials. For completeness we also provide the inverse image $F_T(\mathbb D))^*$, which is the unbounded region on the right. We note in particular that the interval of multipliers $(\mu^*,1)$ does not intersect the interior of $F_T(\mathbb D))^*$, although it does intersect the boundary. This can be easily remedied by considering $F_T^{\epsilon} = \frac{F_T(z)+\epsilon z}{1+\epsilon}$ and sending $\epsilon$ to $0$, so considering the original $F_T$ is justified for computer simulations.

\begin{figure}[!htbp]
\centering
\begin{minipage}[b]{0.55\linewidth}
\includegraphics[scale=0.35]{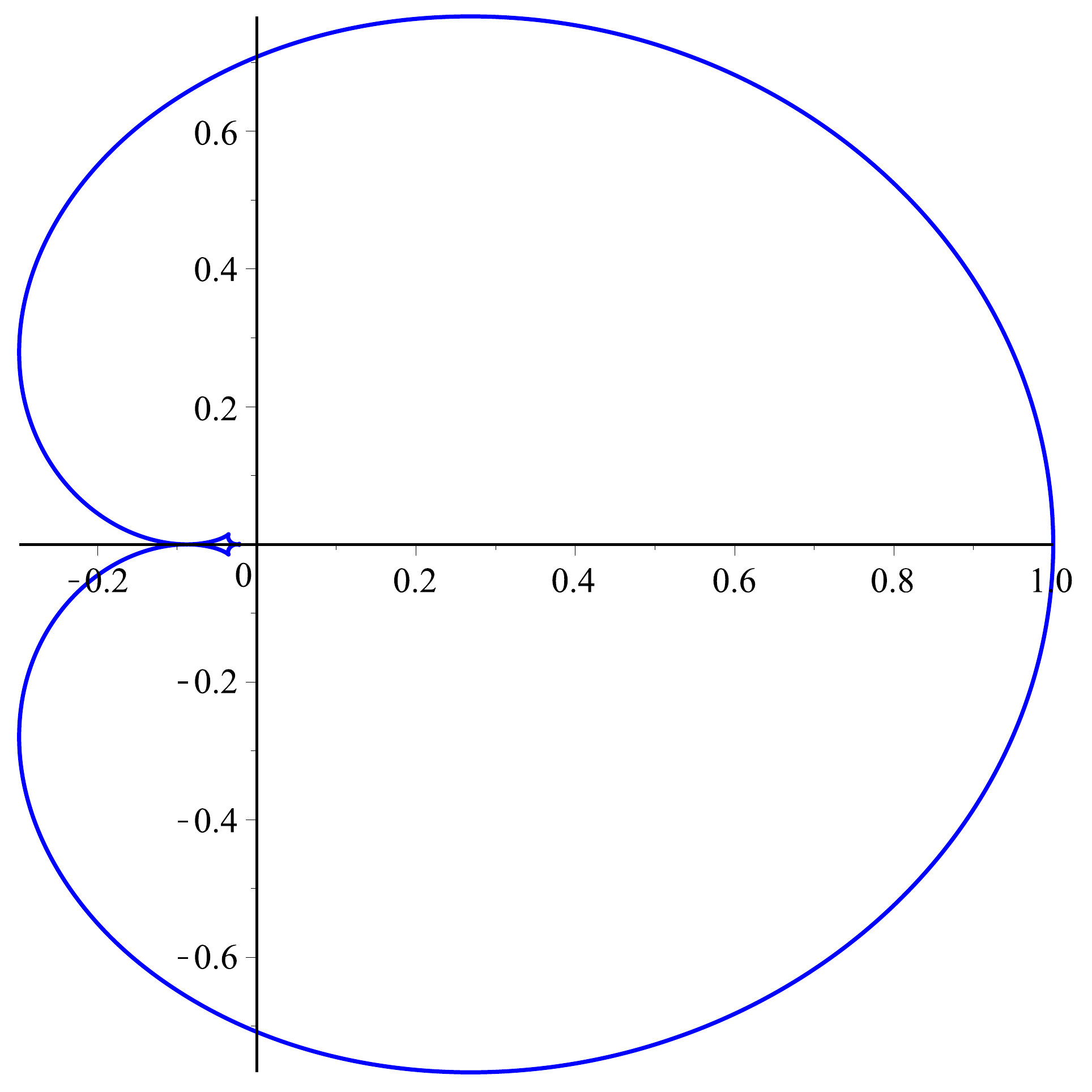}
%\vspace{-9cm}
\caption{$F_5(\mathbb D)$, N=5}
\label{T5N5}
\end{minipage}%\hspace{-.2cm}
\begin{minipage}[b]{0.55\linewidth}
\includegraphics[scale=0.35]{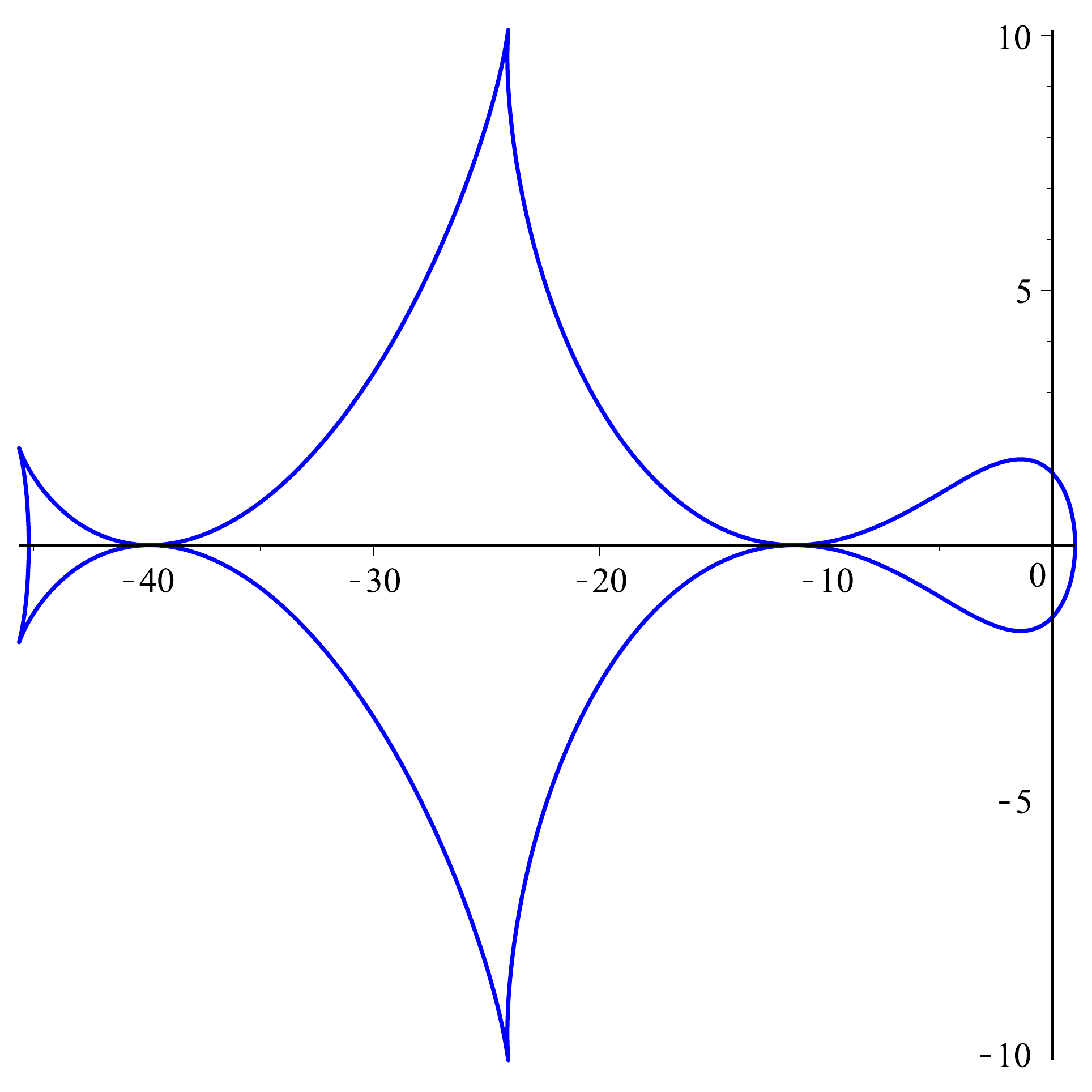}
%\vspace{-9cm}
\caption{$(\mathbb C \backslash F_5(\mathbb D))^*$,  N=5}
\label{T5N5i}
\end{minipage}
%\end{figure}
%\begin{figure}[!htbp]
\centering
\begin{minipage}[b]{0.55\linewidth}
\includegraphics[scale=0.35]{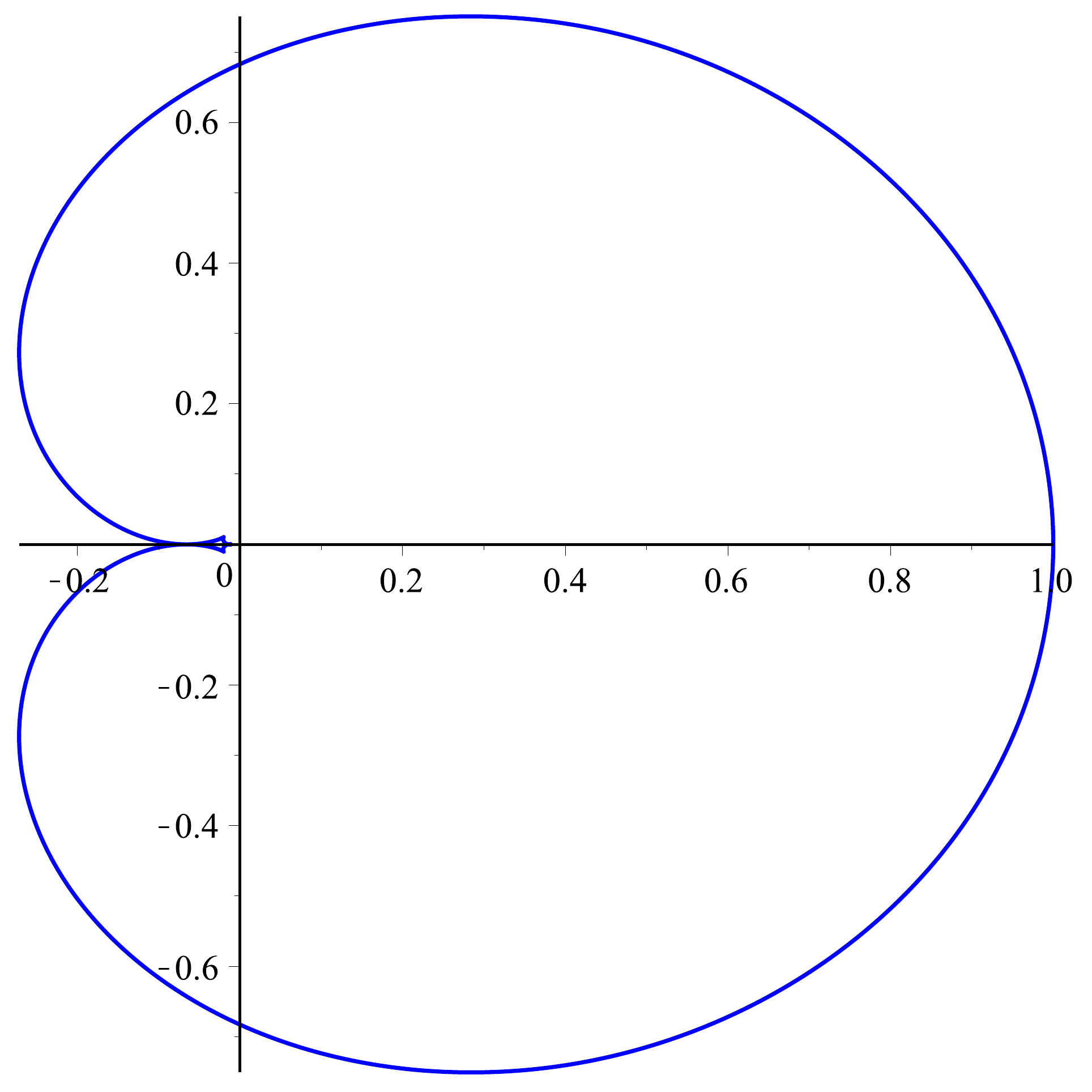}
%\vspace{-9cm}
\caption{$F_{1005}(\mathbb D)$, N=5}
\label{T1005N5}
\end{minipage}%\hspace{-.2cm}
\begin{minipage}[b]{0.55\linewidth}
\includegraphics[scale=0.35]{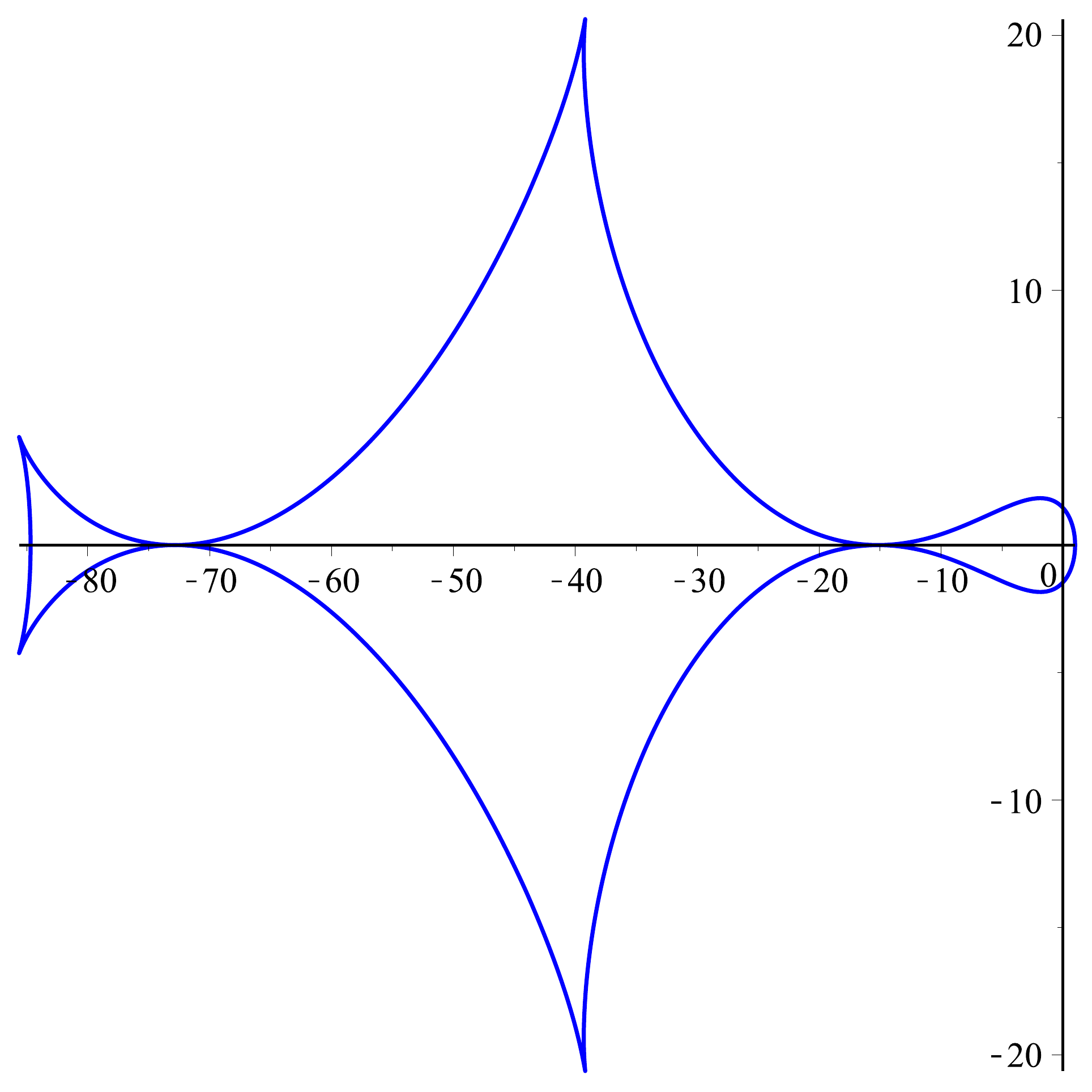}
%\vspace{-9cm}
\caption{$(\mathbb C \backslash F_{1005}(\mathbb D))^*$,  N=5}
\label{T1005N5i}
\end{minipage}
\end{figure}

Based on the observations above we can conjecture that the polynomials $F_T(z)=z(q(z))^T$ map 
the curve $z=e^{it}, 0<t<\pi$ into a curve in the closed upper half plane $\{\Im(z)\ge 0\}.$  Moreover, the image of the curve should not have any self-intersections. Thus, we conjecture\\

{\bf Conjecture A}: {\it 
The polynomials $F_T(z)=z(q(z))^T$ are univalent.\\
}

With\eqref{PN} in mind we also conjecture\\

{\bf Conjecture B}: {\it 
Let $T\geq 3$ and $N\ge 1$ be integers. Then $|I_N^{(T)}|\le (P_N)^T.$\\
}

The presence of cusps in Fig.\ref{T5N5} and  Fig.\ref{T1005N5} indicates that the zeros of the derivative of polynomials are on the unit circle. Thus we can strengthen Conjecture A to the claim that $T$-symmetrized polynomials $zq(z^T)$ might be quasi-extremal in the Genthner-Ruscheweyh-Salinas sense  (see Definition 11 and Theorem 12 in \cite{GRS}).

\section{Some particular cases}

\subsection{Case $T=1$}
In this case
$$
\psi_j=\frac{\pi(2j+1)}{N+1},\qquad \eta(z)=\sum_{j=1}^N\sin\frac{\pi j}{N+1}z^j,
$$
and we get the Suffridge polynomials
$$
F_T(z) = zq(z)=Kz\sum_{j=1}^N\left(1-\frac j{N+1}\right)\sin\frac{\pi j}{N+1}z^{j-1},\qquad
K=2\tan\frac\pi{2(N+1)}.
$$
Furthermore,
\(
P_N=\tan^2\frac\pi{2(N+1)}%\asymp \frac{\pi^2}{4N^2},\quad N\to\infty,
\)
which agrees with \eqref{a1}.

\subsection{Case $T=2$}
In this case
$$
\psi_j=\pi\frac{2j}{N},\qquad \eta(z)=\sum_{j=1}^Nz^j,
$$
and we get the Fej\'er polynomial
$$
F_2(z) = zq(z)^2=K^2z\left(\sum_{j=1}^N\left(1-\frac{2j-1}{2N}\right)z^{j-1}\right)^2,\qquad
K=\frac1N.
$$
Furthermore,
$
(P_N)^2=\frac1{N^2},
$
which agrees with \eqref{a2}.\\

\section{The estimate for the product of cotangents}

The previous section provides the asymptotic behavior for $|I_N^{(T)}|$ in the cases $T=1,2.$ A natural question is whether we can find the asymptotic behavior for $P_N$ when $T\ge 3.$ The  goal of this section is to give an affirmative answer.  We rigorously prove the following result:  

\begin{theorem}\label{main}
 Let $T\geq 3.$ Then $P_N \approx N^{-\frac2T}$ as $N\to\infty$. More precisely, we get that
\begin{equation}\label{mainestodd}
 \lim_{N\to\infty} P_N N^{\frac{2}{T}} = \pi^{\frac{2-T}T}\Bigl(\Gamma(\frac{T+2}{2T})\Bigr)^2 
\end{equation}
\end{theorem}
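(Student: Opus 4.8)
Here is how I would approach Theorem~\ref{main}.

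The plan is to pass to logarithms and study the single sum $S=\sum_{j}\log\cot\frac{\psi_j}{2}$, since $\log P_N$ equals $2S$ when $N$ is odd and $\log\frac{T}{2+(N-1)T}+2S$ when $N$ is even. The decisive first step is to notice that the relevant angles form an arithmetic progression: a short computation gives $\frac{\psi_j}{2}=a(j+c)$, where $a=\frac{\pi T}{2+(N-1)T}\sim\pi/N$ and $c=\frac{2-T}{2T}$, and where $1+c=\frac{T+2}{2T}$ is exactly the argument of the Gamma factor in the claim. The crude estimate $S\approx\frac1a\int_0^{\pi/2}\log\cot\theta\,d\theta$ is worthless by itself, because that integral vanishes; the entire answer is a boundary/second-order effect, so I must extract the correction terms by hand.

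To that end I would isolate the two logarithmic singularities of the integrand by writing $\log\cot\theta=-\log\theta+\log(\tfrac{\pi}{2}-\theta)+h(\theta)$, where $h(\theta)=\log\bigl(\theta\cot\theta\bigr)-\log\bigl(\tfrac{\pi}{2}-\theta\bigr)$ extends to a function that is $C^\infty$ on the closed interval $[0,\tfrac\pi2]$, with $h(0)=-\log\tfrac\pi2$ and $h(\tfrac\pi2)=\log\tfrac\pi2$. Splitting $S$ along this decomposition, the two singular sums evaluate in closed form as ratios of Gamma functions: the sum $-\sum_{j=1}^M\log\bigl(a(j+c)\bigr)$ equals $-M\log a-\log\frac{\Gamma(M+1+c)}{\Gamma(1+c)}$, while $\sum_{j=1}^M\log\bigl(a(B-j)\bigr)$ equals $M\log a+\log\frac{\Gamma(B)}{\Gamma(B-M)}$, with $B=\frac{\pi}{2a}-c$. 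The terms $\pm M\log a$ cancel, which is the first indication that the exponentially large contributions will disappear.

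The arithmetic now cooperates in a crucial way: a direct computation gives $B=\frac{D}{2T}-c=\frac N2$ exactly, so $B-M$ equals $\tfrac12$ when $N$ is odd and $1$ when $N$ is even, producing the value $\Gamma(\tfrac12)=\sqrt\pi$ that is the ultimate source of the $\pi$-powers in the answer. Applying the ratio asymptotic $\Gamma(x)/\Gamma(x+s)\sim x^{-s}$ to $\Gamma(N/2)/\Gamma(N/2+\tfrac1T)$ then produces the factor $N^{-1/T}$ and leaves the surviving constant $\Gamma(1+c)=\Gamma(\tfrac{T+2}{2T})$. For the smooth remainder $\sum_{j=1}^M h(a(j+c))$ I would invoke Euler--Maclaurin summation; its would-be leading term $\frac1a\int_0^{\pi/2}h\,d\theta$ vanishes, because $\int_0^{\pi/2}h=0$ (under $\theta\mapsto\tfrac\pi2-\theta$ the term $\int\log\theta$ maps to $\int\log(\tfrac\pi2-\theta)$ and $\int\log\cot\theta$ maps to its own negative), so only $O(1)$ boundary contributions survive, governed by $h(0)$, $h(\tfrac\pi2)$, and the shift $c+\tfrac12=\tfrac1T$.

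The hard part is the precise bookkeeping of all these $O(1)$ terms. Each of the three pieces is individually of size $\sim N$, and they must cancel so as to leave exactly $-\frac2T\log N$ plus a constant; pinning down that constant forces me to expand every boundary quantity one order beyond the obvious, since $\frac1a\sim N$ multiplies endpoint widths of size $O(1/N)$ into genuine $O(1)$ contributions. Carrying this out and exponentiating yields $P_N N^{2/T}\to\Gamma(1+c)^2\,\pi^{2/T-1}=\pi^{\frac{2-T}{T}}\Gamma\bigl(\tfrac{T+2}{2T}\bigr)^2$. Finally I would check that the even case, where the extra prefactor $\frac{T}{2+(N-1)T}$ contributes an additional $-\log N$ and the top endpoint sits at $\tfrac\pi2-\tfrac a2$ rather than exactly $\tfrac\pi2$, conspires to the same limit, so that \eqref{mainestodd} holds along the full sequence $N\to\infty$.
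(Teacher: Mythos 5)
Your plan is sound and, as far as I can verify, every one of its load-bearing identities is correct: $\tfrac{\psi_j}{2}=a(j+c)$ with $a=\tfrac{\pi T}{2+(N-1)T}$ and $1+c=\tfrac{T+2}{2T}$; the exact value $B=\tfrac{\pi}{2a}-c=\tfrac N2$, hence $\Gamma(B-M)=\Gamma(\tfrac12)$ for $N$ odd and $\Gamma(1)$ for $N$ even; the antisymmetry $h(\tfrac\pi2-\theta)=-h(\theta)$ forcing $\int_0^{\pi/2}h=0$; and the endpoint corrections $-h(0)/T$ (and additionally $-h(\tfrac\pi2)/2$ in the even case) supplying exactly the missing power of $\tfrac\pi2$. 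I reproduced the bookkeeping for both parities and it does land on $\pi^{\frac{2-T}T}\Gamma(\tfrac{T+2}{2T})^2$ (in the even case note $M+1+c=\tfrac N2+\tfrac1T-\tfrac12$, not $\tfrac N2+\tfrac1T$, which is what makes the prefactor $\tfrac{T}{2+(N-1)T}\sim N^{-1}$ come out right; also, the smooth sum is actually $O(1)$ rather than of size $\sim N$ as you say, but that only makes your life easier). However, your route is genuinely different from the paper's. The paper never takes logarithms: it sandwiches $P_N$ by introducing shifted products $P_N^{(m,k)}$, $k=1,\dots,T$, whose product over $k$ telescopes to an explicitly computable expression, and a monotonicity lemma for $x\mapsto\tan x/\tan(x+\gamma)$ shows $P_N^{(m,k)}$ is monotone in $k$, yielding $(P_N^{(m,1)})^T\le\prod_kP_N^{(m,k)}$ and a matching lower bound via a second family $Q_N^{(m,k)}$; letting $N\to\infty$ and then $m\to\infty$ squeezes $P_NN^{2/T}$ between sequences $\beta_m\le\alpha_m$ with $\alpha_m/\beta_m\to1$, and the limit of $\alpha_m$ is then evaluated by the same $\Gamma$-ratio asymptotics you use. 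The paper's argument is elementary (no Euler--Maclaurin, only monotonicity and telescoping) but pays for it with a substantially more intricate even-$N$ case requiring $2T$ shifts and separate $\tilde P,\tilde Q$ families; your decomposition $\log\cot\theta=-\log\theta+\log(\tfrac\pi2-\theta)+h(\theta)$ treats the two parities almost identically, makes the origin of each constant transparent (each $\Gamma$, each power of $\pi$ and of $2$ is traceable to a specific endpoint), and would additionally yield a rate of convergence. To turn the proposal into a proof you must still justify the Euler--Maclaurin step quantitatively -- since $h\in C^\infty[0,\tfrac\pi2]$, the midpoint-rule error is bounded by $a^2M\sup|h''|=O(1/N)$, so this is routine -- and carry out the $O(1)$ bookkeeping you correctly identify as the delicate part.
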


Let us begin with the following very simple lemma which will turn out to be extremely useful:
 
\begin{lemma}\label{lem}
  For any $0<\gamma<\frac{\pi}2$ the function 
 \[
 f:[0,\frac{\pi}2 - \gamma)\to\mathbb R, \qquad f(x)=\frac{\tan x}{\tan(\gamma+x)}
 \]
 is increasing on $[0, \frac12(\frac{\pi}2-\gamma)]$ and decreasing on $[\frac12(\frac{\pi}2-\gamma), \frac{\pi}2 - \gamma)$
 \end{lemma}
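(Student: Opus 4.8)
The plan is to reduce the monotonicity of $f$ to that of a single sine through a product-to-sum simplification. First I would write $\tan$ as a ratio of sine and cosine and apply the identities $\sin x\cos(\gamma+x)=\tfrac12(\sin(2x+\gamma)-\sin\gamma)$ and $\cos x\sin(\gamma+x)=\tfrac12(\sin(2x+\gamma)+\sin\gamma)$, obtaining
$$
f(x)=\frac{\sin x\,\cos(\gamma+x)}{\cos x\,\sin(\gamma+x)}=\frac{\sin(2x+\gamma)-\sin\gamma}{\sin(2x+\gamma)+\sin\gamma}.
$$
On the interval $[0,\tfrac\pi2-\gamma)$ the argument $2x+\gamma$ lies in $[\gamma,\pi-\gamma)\subset(0,\pi)$, so $\sin(2x+\gamma)>0$; together with $\sin\gamma>0$ this guarantees that the denominator never vanishes and the expression is well defined.

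Next I would observe that $f$ is the composition of the map $x\mapsto t:=\sin(2x+\gamma)$ with the Möbius-type map $g(t)=\frac{t-\sin\gamma}{t+\sin\gamma}$. Since $g'(t)=\frac{2\sin\gamma}{(t+\sin\gamma)^2}>0$ whenever $t+\sin\gamma\neq 0$, the map $g$ is strictly increasing on the relevant range of $t$. Therefore the monotonicity of $f$ in $x$ coincides exactly with the monotonicity of $t=\sin(2x+\gamma)$ as a function of $x$.

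Finally I would analyze $\sin(2x+\gamma)$ directly. As $x$ runs over $[0,\tfrac\pi2-\gamma)$ the phase $2x+\gamma$ increases from $\gamma$ to $\pi-\gamma$, passing through $\tfrac\pi2$ precisely at $x=\tfrac12(\tfrac\pi2-\gamma)$. Hence $\sin(2x+\gamma)$ increases on $[0,\tfrac12(\tfrac\pi2-\gamma)]$ and decreases on $[\tfrac12(\tfrac\pi2-\gamma),\tfrac\pi2-\gamma)$, and by the previous step $f$ inherits exactly this behavior, which is the assertion of the lemma.

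I expect no serious obstacle; the only point requiring care is confirming that the denominator $\sin(2x+\gamma)+\sin\gamma$ stays strictly positive throughout the interval, so that the reduction to $g$ is legitimate and $g$ is genuinely increasing rather than merely increasing on each side of a pole. An alternative route would be to differentiate $f$ directly and reduce the sign of $f'$ to that of $\sin\gamma\,\cos(2x+\gamma)$, but the product-to-sum simplification is cleaner and avoids the bookkeeping with the $\sec^2$ terms.
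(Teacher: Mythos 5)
Your proof is correct, and it takes a genuinely different (derivative-free) route from the paper. The paper simply differentiates the quotient and records that
\[
\frac{d}{dx}\,\frac{\tan x}{\tan(\gamma+x)}=\frac{\sin\gamma\,\cos(2x+\gamma)}{\sin^2(x+\gamma)\cos^2(x+\gamma)\cos^2 x},
\]
so the sign of $f'$ is the sign of $\cos(2x+\gamma)$, which changes exactly at $x=\tfrac12(\tfrac\pi2-\gamma)$ --- this is precisely the ``alternative route'' you mention at the end. Your argument instead rewrites $f$ via product-to-sum as $g(\sin(2x+\gamma))$ with $g(t)=\frac{t-\sin\gamma}{t+\sin\gamma}$ increasing on the relevant range, and transfers the monotonicity of $\sin(2x+\gamma)$ to $f$. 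Both proofs isolate the same pivot $2x+\gamma=\tfrac\pi2$; the paper's is shorter once the derivative is computed, while yours avoids the quotient-rule bookkeeping entirely and makes the positivity issues transparent (your check that $\sin(2x+\gamma)+\sin\gamma>0$ on $[\gamma,\pi-\gamma)$ is the only point of care, and you handle it correctly, since in fact $\sin(2x+\gamma)\ge\sin\gamma>0$ there). Either proof is acceptable; yours has the mild additional virtue of exhibiting $f$ explicitly as a monotone function of $\sin(2x+\gamma)$, which makes the location of the maximum self-evident.
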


\begin{proof}
 A simple computation gives that
\[
\frac{d}{dx} f(x) = \frac{\sin\gamma\cos(2x+\gamma)}{\sin^2(x+\gamma)\cos^2(x+\gamma)\cos^2 x}
\] 
and the conclusion of the lemma follows immediately.

\end{proof}

\begin{proof}

 We will start the proof of our main theorem with the case when $N$ is odd. We first note that, since $\cot (\frac{\pi}2 - x) = \tan x$, we can rewrite
\[
 P_N = \prod_{j=1}^{\frac{N-1}2} \frac{\tan \pi \frac{(2j-1)T}{2(2+(N-1)T)}}{\tan \pi \frac{2+(2j-1)T}{2(2+(N-1)T)}}
\] 

Let us now fix $m\ll N$, and define
\[
P_N^{(m,1)} = \prod_{j=m+1}^{\frac{N-1}2-m} \frac{\tan \pi \frac{(2j-1)T}{2(2+(N-1)T)}}{\tan \pi \frac{2+(2j-1)T}{2(2+(N-1)T)}}
\]

The strategy will be to first estimate $P_N^{(m,1)}$ from above and below, let $N\to\infty$, and then let $m\to\infty$. 

To accomplish the first task, let us first note that we can rewrite $P_N^{(m,1)}$ in the form
\[
P_N^{(m,1)} = \prod_{j=m+1}^{\lfloor\frac{N+1}4\rfloor-1} \frac{\tan \pi \frac{(2j-1)T}{2(2+(N-1)T)}}{\tan \pi \frac{2+(2j-1)T}{2(2+(N-1)T)}}\prod_{j=\lfloor\frac{N+1}4\rfloor}^{\lfloor\frac{N+1}4\rfloor+1} \frac{\tan \pi \frac{(2j-1)T}{2(2+(N-1)T)}}{\tan \pi \frac{2+(2j-1)T}{2(2+(N-1)T)}}\prod_{j=\lfloor\frac{N+1}4\rfloor+2}^{\frac{N-1}2-m} \frac{\tan \pi \frac{(2j-1)T}{2(2+(N-1)T)}}{\tan \pi \frac{2+(2j-1)T}{2(2+(N-1)T)}}
\]

Now let
\[
P_N^{(m, k)} = \prod_{j=m+1}^{\lfloor\frac{N+1}4\rfloor-1} \frac{\tan \pi \frac{2(k-1)+(2j-1)T}{2(2+(N-1)T)}}{\tan \pi \frac{2k + (2j-1)T}{2(2+(N-1)T)}}\prod_{j=\lfloor\frac{N+1}4\rfloor}^{\lfloor\frac{N+1}4\rfloor+1} \frac{\tan \pi \frac{(2j-1)T}{2(2+(N-1)T)}}{\tan \pi \frac{2+(2j-1)T}{2(2+(N-1)T)}}\prod_{j=\lfloor\frac{N+1}4\rfloor+2}^{\frac{N-1}2-m} \frac{\tan \pi \frac{2(1-k)+(2j-1)T}{2(2+(N-1)T)}}{\tan \pi \frac{2(2-k)+(2j-1)T}{2(2+(N-1)T)}}
\]
where $1\leq k \leq T$. Note that when $k=1$ we indeed recover the correct formula for $P_N^{(m,1)}$.

The first key observation is that
\begin{equation}\label{prod}
\prod_{k=1}^T P_N^{(m,k)} = \frac{\tan \pi\frac{(2m+1)T}{2(2+(N-1)T)}}{\tan \pi\frac{(2\lfloor\frac{N+1}4\rfloor-1)T}{2(2+(N-1)T)}}\Bigl(\prod_{j=\lfloor\frac{N+1}4\rfloor}^{\lfloor\frac{N+1}4\rfloor+1} \frac{\tan \pi \frac{(2j-1)T}{2(2+(N-1)T)}}{\tan \pi \frac{2+(2j-1)T}{2(2+(N-1)T)}}\Bigr)^T \, \frac{\tan \pi\frac{2+(2\lfloor\frac{N+1}4\rfloor+1)T}{2(2+(N-1)T)}}{\tan \pi \frac{2+(N-2m-2)T}{2(2+(N-1)T)}}
\end{equation}

The second key observation is that 
\begin{equation}\label{ineq}
P_N^{(m,k)} \leq P_N^{(m, k+1)}, \; 1\leq k\leq T-1.
\end{equation}

 Indeed, this is a simple consequence of Lemma~\ref{lem} applied for each term in the product with $\gamma = \pi\frac{2}{2(2+(N-1)T)}$.
 
As a consequence of \eqref{prod}, \eqref{ineq} and the fact that $\tan \pi\frac{(2m+1)T}{2(2+(N-1)T)} = \frac{1}{\tan \pi \frac{2+(N-2m-2)T}{2(2+(N-1)T)}}$ we obtain
\[
(P_N^{(m,1)})^T \leq \tan^2 \pi\frac{(2m+1)T}{2(2+(N-1)T)}\frac{\tan \pi\frac{2+(2\lfloor\frac{N+1}4\rfloor+1)T}{2(2+(N-1)T)}}{\tan \pi\frac{(2\lfloor\frac{N+1}4\rfloor-1)T}{2(2+(N-1)T)}} \Bigl(\prod_{j=\lfloor\frac{N+1}4\rfloor}^{\lfloor\frac{N+1}4\rfloor+1} \frac{\tan \pi \frac{(2j-1)T}{2(2+(N-1)T)}}{\tan \pi \frac{2+(2j-1)T}{2(2+(N-1)T)}}\Bigr)^T 
\]

Since 
\[
\lim_{N\to\infty} N^2 \tan^2 \pi\frac{(2m+1)T}{2(2+(N-1)T)}  = (\frac{\pi}2)^2\cdot (2m+1)^2
\]
and
\[
\lim_{N\to\infty} \frac{\tan \pi\frac{2+(2\lfloor\frac{N+1}4\rfloor+1)T}{2(2+(N-1)T)}}{\tan \pi\frac{(2\lfloor\frac{N+1}4\rfloor-1)T}{2(2+(N-1)T)}} \Bigl(\prod_{j=\lfloor\frac{N+1}4\rfloor}^{\lfloor\frac{N+1}4\rfloor+1} \frac{\tan \pi \frac{(2j-1)T}{2(2+(N-1)T)}}{\tan \pi \frac{2+(2j-1)T}{2(2+(N-1)T)}}\Bigr)^T = 1
\]
we obtain that
\[
\limsup_{N\to\infty} N^{\frac2T} P_N^{(m,1)} \leq \Bigl((\frac{\pi}2)^2\cdot (2m+1)^2\Bigr)^{\frac1T}
\]

 Moreover, since 
\[
P_N = \prod_{j=1}^m \Bigl(\frac{\tan \pi \frac{(2j-1)T}{2(2+(N-1)T)}}{\tan \pi \frac{2+(2j-1)T}{2(2+(N-1)T)}}\Bigr)^2 P_N^{(m,1)}
\] 
and 
\[
\lim_{N\to\infty} \prod_{j=1}^m \frac{\tan \pi \frac{(2j-1)T}{2(2+(N-1)T)}}{\tan \pi \frac{2+(2j-1)T}{2(2+(N-1)T)}} = \prod_{j=1}^m \frac{(2j-1)T}{2+(2j-1)T}
\]
we obtain that for any fixed $m$ 
\begin{equation}\label{sup}
\limsup_{N\to\infty} P_N N^{\frac{2}{T}} \leq (\frac{\pi}2)^{\frac2T} (2m+1)^{\frac2T}\prod_{j=1}^m \Bigl(\frac{(2j-1)T}{2+(2j-1)T}\Bigr)^2 := \alpha_m.
\end{equation}

 We use a similar argument to bound $\limsup_{N\to\infty} P_N N^{\frac{2}{T}}$. Let 
\[
Q_N^{(m,1)} = \prod_{j=m+1}^{\frac{N-1}2-m} \frac{\tan \pi \frac{(2j-1)T}{2(2+(N-1)T)}}{\tan \pi \frac{2+(2j-1)T}{2(2+(N-1)T)}}
\] 

Now, define $Q_N^{(m, k)}$ by the equation
\[
Q_N^{(m, k)}=\prod_{j=m+1}^{\lfloor\frac{N+1}4\rfloor} \frac{\tan \pi \frac{2(1-k)+(2j-1)T}{2(2+(N-1)T)}}{\tan \pi \frac{2(2-k) + (2j-1)T}{2(2+(N-1)T)}}\prod_{j=\lfloor\frac{N+1}4\rfloor+1}^{\frac{N-1}2-m} \frac{\tan \pi \frac{2(k-1)+(2j-1)T}{2(2+(N-1)T)}}{\tan \pi \frac{2k+(2j-1)T}{2(2+(N-1)T)}}
\]
where $1\leq k \leq T$. Note that when $k=1$ we indeed recover the correct formula for $Q_N^{(m,1)}$.

As above, one can show that
\[
Q_N^{(m,1)} \geq Q_N^{(k)}, \qquad 1\leq k\leq T.
\]
and
\[
 \prod_{k=1}^T Q_N^{(m,k)} = \frac{\tan \pi\frac{2+(2m-1)T}{2(2+(N-1)T)}}{\tan \pi\frac{2+(2\lfloor\frac{N+1}4\rfloor-1)T}{2(2+(N-1)T)}} \frac{\tan \pi\frac{(2\lfloor\frac{N+1}4\rfloor+1)T}{2(2+(N-1)T)}}{\tan \pi \frac{(N-2m)T}{2(2+(N-1)T)}}
\]

Since 
\[
\lim_{N\to\infty} N^2 \tan^2 \pi\frac{2+(2m-1)T}{2(2+(N-1)T)}  = (\frac{\pi}2)^2\cdot (2m-1)^2
\]
and
\[
\lim_{N\to\infty} \frac{\tan \pi\frac{(2\lfloor\frac{N+1}4\rfloor+1)T}{2(2+(N-1)T)}}{\tan \pi\frac{2+(2\lfloor\frac{N+1}4\rfloor-1)T}{2(2+(N-1)T)}} = 1
\]
we obtain that
\[
\liminf_{N\to\infty} N^{\frac2T} Q_N^{(m,1)} \geq \Bigl((\frac{\pi}2)^2\cdot (2m-1)^2\Bigr)^{\frac1T}
\]

 Moreover, since 
\[
P_N = \prod_{j=1}^m \Bigl(\frac{\tan \pi \frac{(2j-1)T}{2(2+(N-1)T)}}{\tan \pi \frac{2+(2j-1)T}{2(2+(N-1)T)}}\Bigr)^2 Q_N^{(m,1)}
\] 
and 
\[
\lim_{N\to\infty} \prod_{j=1}^m \frac{\tan \pi \frac{(2j-1)T}{2(2+(N-1)T)}}{\tan \pi \frac{2+(2j-1)T}{2(2+(N-1)T)}} = \prod_{j=1}^m \frac{(2j-1)T}{2+(2j-1)T}
\]
we obtain that for any fixed $m$ 
\begin{equation}\label{inf}
\liminf_{N\to\infty} P_N N^{\frac{2}{T}} \geq (\frac{\pi}2)^{\frac2T} (2m-1)^{\frac2T}\prod_{j=1}^m \Bigl(\frac{(2j-1)T}{2+(2j-1)T}\Bigr)^2 := \beta_m.
\end{equation}

From \eqref{sup} and \eqref{inf} we obtain that
\[
\beta_m \leq \liminf_{N\to\infty} P_N N^{\frac{2}{T}} \leq \limsup_{N\to\infty} P_N N^{\frac{2}{T}} \leq \alpha_m
\]

It is now easy to check that $\beta_m$ is an increasing sequence, $\alpha_m$ is a decreasing sequence, and $\lim_{m\to\infty}\frac{\alpha_m}{\beta_m}=1$. Thus we obtain that $\lim_{N\to\infty} P_N N^{\frac{2}{T}}$ exists and equals $\lim_{m\to\infty} \alpha_m$. To finish the case of $N$ odd, we are left to prove that
\begin{equation}\label{limNodd}
\lim_{m\to\infty} \alpha_m = \pi^{\frac{2-T}T}\Bigl(\Gamma(\frac{T+2}{2T})\Bigr)^2
\end{equation}

We can write
\[
\alpha_m = \pi^{\frac2T} (m+\frac12)^{\frac2T} \Bigl(\frac{\prod_{j=1}^m (2jT - T)}{\prod_{j=1}^m (2jT - (T-2))}\Bigr)^2
\]

Since $\Gamma(x) = x \Gamma(x-1)$, one easily obtains:
\[
\prod_{j=1}^m (jr-s) = r^m \Gamma(m+\frac{r-s}r)/\Gamma(\frac{r-s}r), \qquad 0\leq s\leq r-1
\]

By applying the above formula with $r=2T$, $s=T$ for the numerator, and $r=2T$, $s=T-2$ for the denominator we obtain
\[
\alpha_m = \pi^{\frac2T} (m+\frac12)^{\frac2T} \Bigl(\frac{\Gamma(m+\frac12)}{\Gamma(m+\frac{T+2}{2T})}\frac{\Gamma(\frac{T+2}{2T})}{\Gamma(\frac12)}\Bigr)^2
\]

We now use the asymptotic for the $\Gamma$ function
\[
\Gamma(z) = e^{-z}z^{z-1/2}\sqrt{2\pi}(1+O(1/z))
\]
to obtain
\[\begin{split}
\frac{\Gamma(m+\frac12)}{\Gamma(m+\frac{T+2}{2T})} & = \frac{e^{-(m+\frac12)}(m+\frac12)^{(m+\frac12)-1/2}\sqrt{2\pi}(1+O(1/m))}{e^{-(m+\frac{T+2}{2T})}(m+\frac{T+2}{2T})^{(m+\frac{T+2}{2T})-1/2}\sqrt{2\pi}(1+O(1/m))} \\ & = e^{\frac1T} (m+\frac12)^{-\frac1T}\Bigl(\frac{m+\frac12}{m+\frac{T+2}{2T}}\Bigr)^{m+\frac1T} + O(1/m) = (m+\frac12)^{-\frac1T} + O(1/m)
\end{split}\]
since
\[
\lim_{m\to\infty}\Bigl(\frac{m+\frac12}{m+\frac{T+2}{2T}}\Bigr)^{m+\frac1T} = \lim_{m\to\infty}\Bigl(1 -\frac{\frac1T}{m+\frac{T+2}{2T}}\Bigr)^{m+\frac1T} = e^{-\frac1T}
\]

We thus obtain 
\[
\lim_{m\to\infty} \alpha_m = \pi^{\frac2T}\Bigl(\frac{\Gamma(\frac{T+2}{2T})}{\Gamma(\frac12)}\Bigr)^2
\]
and \eqref{limNodd} follows since $\Gamma(\frac12) = \sqrt{\pi}$.

 The proof for $N$ even follows in a similar manner. As above, we can rewrite
\[
P_N = \frac{T}{2+(N-1)T} \tP_N
\] 
where
\[
\tP_N = \prod_{j=1}^{\frac{N-2}2} \frac{\tan \pi \frac{2jT}{2(2+(N-1)T)}}{\tan \pi \frac{2+(2j-1)T}{2(2+(N-1)T)}}
\]

 For any $m\ll N$, let
\[
\tP_N^{(m,1)} = \prod_{j=m+1}^{\frac{N-2}2-m} \frac{\tan \pi \frac{2jT}{2(2+(N-1)T)}}{\tan \pi \frac{2+(2j-1)T}{2(2+(N-1)T)}}
\] 
and define
\[
\tP_N^{(m, k)} = \prod_{j=m+1}^{\lfloor\frac{N}4\rfloor} \frac{\tan \pi \frac{-(k-1)+2jT}{2(2+(N-1)T)}}{\tan \pi \frac{-(k-1) + 2+(2j-1)T}{2(2+(N-1)T)}} \prod_{j=\lfloor\frac{N}4\rfloor+1}^{\frac{N-2}2-m} \frac{\tan \pi \frac{(k-1)+2jT}{2(2+(N-1)T)}}{\tan \pi \frac{(k-1) + 2+(2j-1)T}{2(2+(N-1)T)}}
\]
\[
\tQ_N^{(m, k)} = \prod_{j=m+1}^{\lfloor\frac{N}4\rfloor-1} \frac{\tan \pi \frac{(k-1)+2jT}{2(2+(N-1)T)}}{\tan \pi \frac{(k-1) + 2+(2j-1)T}{2(2+(N-1)T)}} \prod_{j=\lfloor\frac{N}4\rfloor}^{\lfloor\frac{N}4\rfloor+1} \frac{\tan \pi \frac{2jT}{2(2+(N-1)T)}}{\tan \pi \frac{2+(2j-1)T}{2(2+(N-1)T)}} \prod_{j=\lfloor\frac{N}4\rfloor+1}^{\frac{N-2}2-m} \frac{\tan \pi \frac{(k-1)+2jT}{2(2+(N-1)T)}}{\tan \pi \frac{(k-1) + 2+(2j-1)T}{2(2+(N-1)T)}}
\]
where $1\leq k \leq 2T$. We observe that $\tP_N^{(m, 1)} = \tQ_N^{(m, 1)}$ and by applying Lemma~\ref{lem} as in the odd case,
\[
\tQ_N^{(m, k)} \leq \tP_N^{(m, 1)} \leq \tP_N^{(m, k)}, \qquad 1\leq k\leq 2T
\]

 Moreover, we have
\[
\prod_{k=1}^{2T} \tP_N^{(m,k)} = \prod_{i=0}^{T-3}\frac{1}{\tan^2 \pi \frac{-i-T+(2m+1)T}{2(2+(N-1)T)}} F(N)
\]
\[
\prod_{k=1}^{2T} \tQ_N^{(m,k)} = \prod_{i=0}^{T-3}\frac{1}{\tan^2 \pi \frac{i+2+(2m+1)T}{2(2+(N-1)T)}} G(N)
\]
with $\lim_{N\to\infty} F(N)=\lim_{N\to\infty} G(N) = 1$.

Since 
\[\begin{split}
\lim_{N\to\infty} N^{2(2-T)} \prod_{i=0}^{T-3}\frac{1}{\tan^2 \pi \frac{-i-T+(2m+1)T}{2(2+(N-1)T)}} F(N)  = (\frac2{\pi})^{2(T-2)}\prod_{i=0}^{T-3} \frac{T^2}{(2mT-i)^2}
\end{split}\]
\[\begin{split}
\lim_{N\to\infty} N^{2(2-T)} \prod_{i=0}^{T-3}\frac{1}{\tan^2 \pi \frac{i+2+(2m+1)T}{2(2+(N-1)T)}} G(N)  = (\frac2{\pi})^{2(T-2)}\prod_{i=0}^{T-3} \frac{T^2}{(2mT+T+i+2)^2}
\end{split}\]
we obtain that 
\[
\limsup_{N\to\infty} N^{\frac{2-T}T} \tP_N^{(m,1)} \leq \Big[(\frac2{\pi})^{2(T-2)}\prod_{i=0}^{T-3} \frac{T^2}{(2mT-i)^2}\Bigr]^{\frac1{2T}}
\]
\[
\liminf_{N\to\infty} N^{\frac{2-T}T} \tP_N^{(m,1)} \geq \Big[(\frac2{\pi})^{2(T-2)}\prod_{i=0}^{T-3} \frac{T^2}{(2mT+T+i+2)^2}\Bigr]^{\frac1{2T}}
\]

 Moreover, since 
\[
P_N = \frac{T}{2+(N-1)T} \prod_{j=1}^m \Bigl(\frac{\tan \pi \frac{2jT}{2(2+(N-1)T)}}{\tan \pi \frac{2+(2j-1)T}{2(2+(N-1)T)}}\Bigr)^2 \tP_N^{(m,1)}
\] 
we obtain for every $m$ that
\[
\tilde\beta_m \leq \liminf_{N\to\infty} P_N N^{\frac{2}{T}} \leq \limsup_{N\to\infty} P_N N^{\frac{2}{T}} \leq \tilde\alpha_m
\]
with
\[
\tilde\beta_m = \prod_{j=1}^m \Bigl(\frac{2jT}{2+(2j-1)T}\Bigr)^2 \pi^{\frac{2-T}T} \prod_{i=0}^{T-3} \Bigl[\frac{T}{mT+(T+i+2)/2}\Bigr]^{\frac1{T}}
\]
\[
\tilde\alpha_m = \prod_{j=1}^m \Bigl(\frac{2jT}{2+(2j-1)T}\Bigr)^2 \pi^{\frac{2-T}T} \prod_{i=0}^{T-3} \Bigl[\frac{T}{mT-i/2}\Bigr]^{\frac1{T}}
\]

 Since 
\[
\lim_{m\to\infty} m^{\frac{T-2}T} \prod_{i=0}^{T-3} \Bigl[\frac{T}{mT+(T+i+2)/2}\Bigr]^{\frac1{T}} = \lim_{m\to\infty} m^{\frac{T-2}T} \prod_{i=0}^{T-3} \Bigl[\frac{T}{mT-i/2}\Bigr]^{\frac1{T}} = 1
\] 

it follows that $\lim_{m\to\infty} \frac{\tilde\alpha_m}{\tilde\beta_m} = 1$, and it is enough to prove that
\[
\lim_{m\to\infty} m^{\frac{2-T}T} \prod_{j=1}^m \Bigl(\frac{2jT}{2+(2j-1)T}\Bigr)^2 \pi^{\frac{2-T}T} = \pi^{\frac{2-T}T}\Bigl(\Gamma(\frac{T+2}T)\Bigr)^2
\]

 But this follows just like in the case of $N$ odd by using
\[
\prod_{j=1}^m 2jT = (2T)^m \Gamma(m+1)
\] 
\[
\prod_{j=1}^m (2+(2j-1)T) = (2T)^m \frac{\Gamma(m+\frac{T+2}{2T})}{\Gamma(\frac{T+2}{2T})}
\]
\[
\lim_{m\to\infty} \frac{\Gamma(m+1)}{\Gamma(m+\frac{T+2}{2T})} m^{\frac{2-T}{2T}} = 1
\]

\end{proof}

Note that if $T=1$ then $\Gamma(3/2)=\sqrt\pi/2$ and if $T=2$ then $\Gamma(2)=1$. Substituting in \eqref{mainestodd} yields $\pi^2/(4N^2)$ for $T=1$, which is the asymptotic behavior in \eqref{I1N}, and exactly $1/N^2$ which is the value in \eqref{I2N}. This justifies the following \\

{\bf Conjecture C}: {\it 
Let $T\geq 3$ and $N\ge 1$ be integers. Then $|I_N^{(T)}|= (P_N)^T.$\\
}\\

The proven theorem gives a reasonable approximation to the interval $(-(P_N)^T,1)$ that is conjectured to be the optimal range of multipliers $(-\mu^*,1).$ It is remarkable that asymptotically the range for the admissible multipliers grows as $N^{2}$, regardless of the value of $T.$ The dependence on $T$ is in the constant.

It is also interesting to compare the left and right sides of the relation \eqref{mainestodd}. Numeric simulations indicates that they are pretty close even for small values of $T$ and $N.$
Say, for $T=5$ and $N=5$ we have 
$
F_T(-1)=0.02211102001\approx0.01752033601.
$
For T=105 and N=55, 
$
F_T(-1)=0.00006979604353\approx0.00006734173127.
$
For T=1005 and N=25,
$
F_T(-1)=0.0003388694786\approx0.0003126559570.
$
For T=1005 and N=35,
$
F_T(-1)=0.0001689295955\approx0.0001595183454.
$
For T=1005 and N=55,
$
F_T(-1)=0.00006699551479\approx0.00006459833822.
$

\section{Applications to non-linear dynamics}

 Scalar discrete systems always have real multipliers, so our method can be applied. We provided an example of finding an 8-cycle for the logistic map in \cite{DKST}. In this paper we choose to provide several examples of cycle detection in the vector case. Since we do not know whether the multipliers are real negative there is no guarantee that applying averaging we will find a cycle. Thus, we simply apply the averaging procedure and check if we found the cycle of the given length. We first run several hundreds iterations of the open loop system, and then we switch to averaging using the produced chaotic orbits as  initial values.\\

\subsection{H\'enon Map} The H\'enon Map is defined by
$$
\begin{cases}
x_{n+1}=&1-ax_n^2+by_n,\\
y_{n+1}=&x_n.
\end{cases}
\qquad  |b_n|<1, a_n>0.
$$
\begin{figure}[!htbp]
\centering
\begin{minipage}[b]{0.75\linewidth}
\centerline{\includegraphics[scale=0.55]{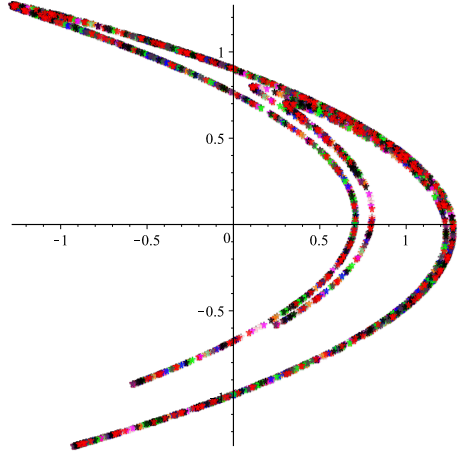}}
\vspace{-6cm}
\caption{H\'enon map}
\label{henon}
\end{minipage}
\end{figure}

Fig.\ref{henon} displays the strange attractor of the map, with different colors corresponding to orbits with the different initial values.

The  H\'enon map was well studied by mathematicians (c.f. for example \cite{BC}). It is standard to pick $a=1.4, b=0.3.$, in which case there are two equilibria:  $(x_1,y_1)=(0.631354477,0.631354477)$ and $(x_2,y_2)=(-1.3199135566,-1.3199135566)$ with the corresponding multipliers
$$
\mu_{1,2}(x_1,y_1)\in \{ 0.1559463223, -1.923738858\} 
$$
and 
$$
\mu_{1,2}(x_2,y_2)\in \{-0.09202956204, 3.259822098\}
$$
We remark that the second equilibrium has a positive multiplier greater than 1, in which case our averaging method does not work.\\

We have applied 400 iterations to the initial system \eqref{system} and then the system \eqref{system} was replaced with the averaging system \eqref{ave} for $T=1, N=8.$ Fig.~\ref{henon1} displays the orbits of system \eqref{ave} for $n=590,...,690$ iterations. 
 
\begin{figure}[!htbp]
\centering
%\hspace{.1cm}
\begin{minipage}[b]{0.75\linewidth}
\centerline{\includegraphics[scale=0.35]{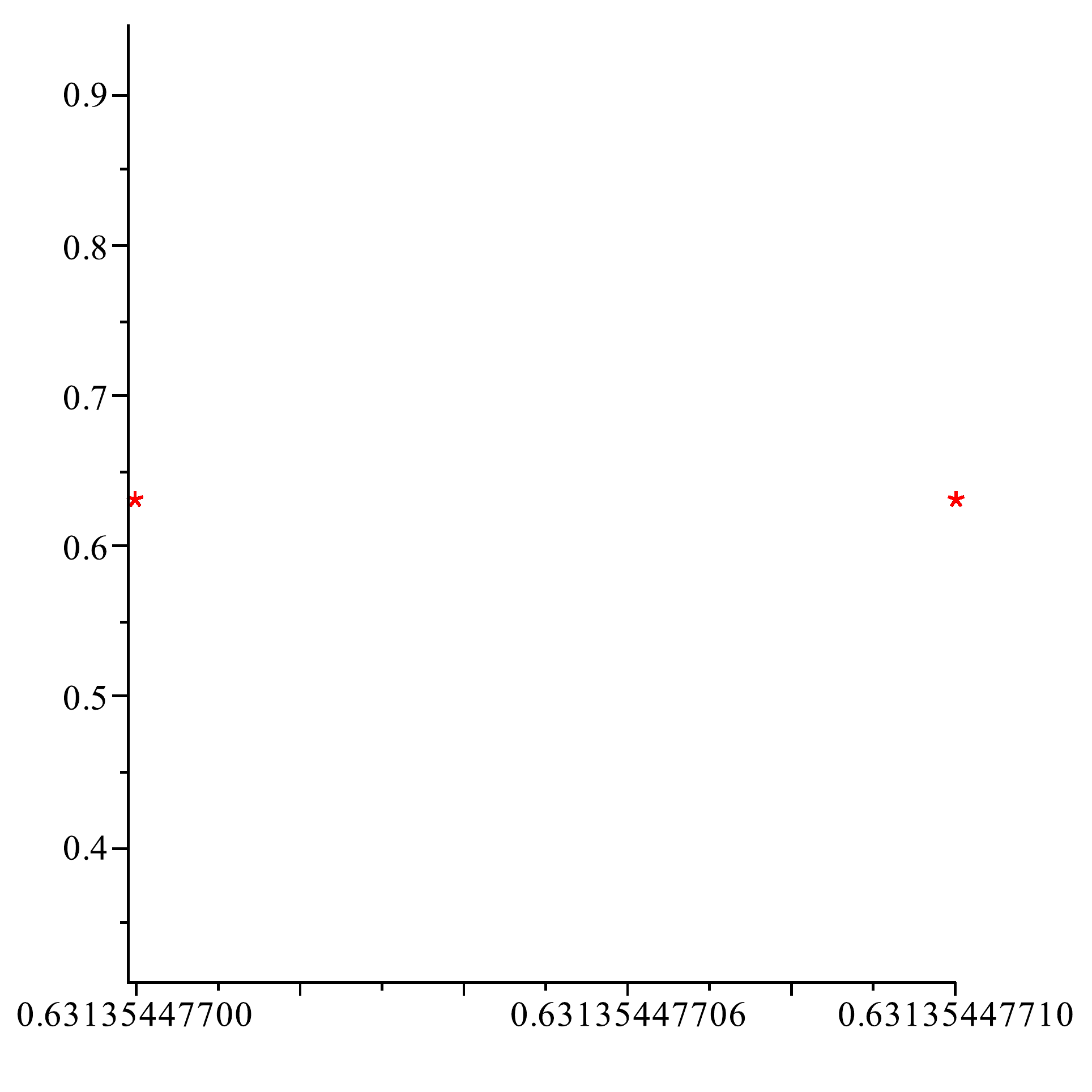}\hspace{0.2cm}
\includegraphics[scale=0.35]{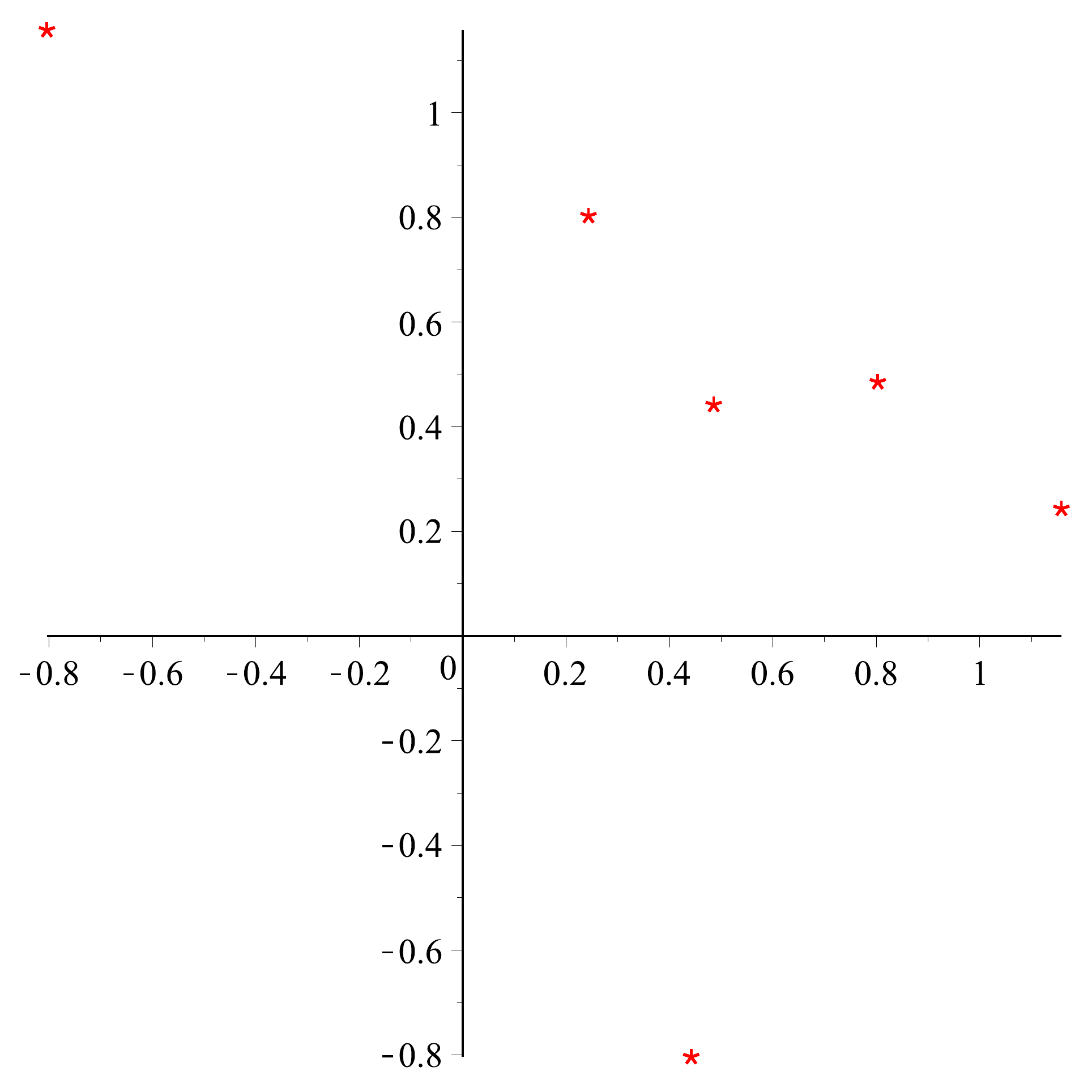}}
\caption{H\'enon Map equilibrium and 6-cycle}
\label{henon1}
\end{minipage}
\end{figure}
One can observe only two spots, which is actually only one since the difference is very small. The spots are red because in MAPLE code the red color was assigned to the last initial value, but all the other colors  converge to this equilibrium. Fig.\ref{henon1} also displays a 6-cycle of the map.\\

\subsection{Lozi Map} 
The Lozi Map is a two-dimensional map similar to the H\'enon map but with a different nonlinear term. It is given by the equations

$$
\begin{cases}
x_{n+1}=&1-\alpha|x_n|+y_n	\\
y_{n+1}=&\beta x_n.	
\end{cases}
$$

The standard choice of parameters is $\alpha=1.4, \beta=0.3.$

\begin{figure}[!htbp]
\centering
%\hspace{.1cm}
\begin{minipage}[b]{0.75\linewidth}
\centerline{\includegraphics[scale=0.3]{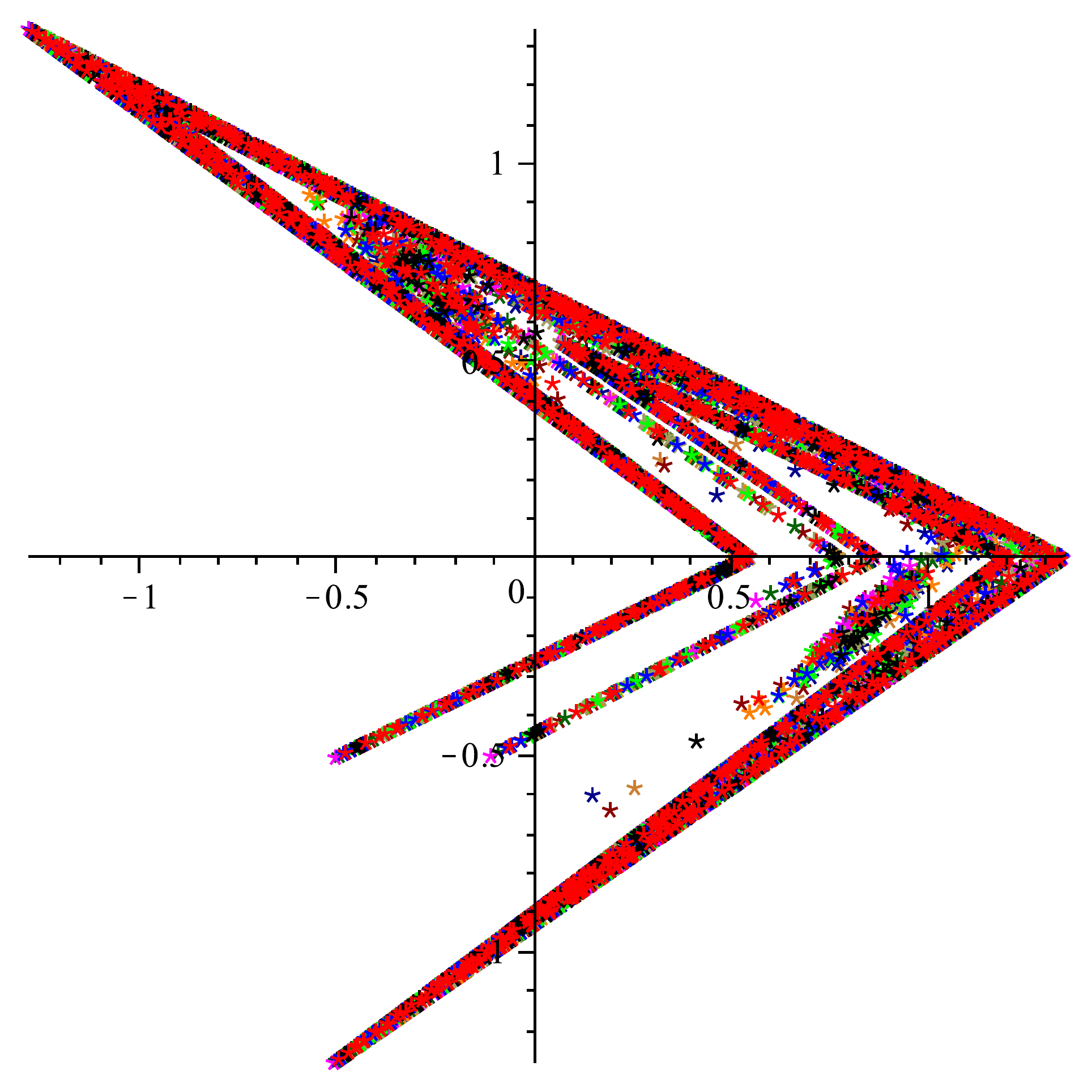}}
%\vspace{-8cm}
\caption{Lozi map}
\label{Lozi}
\end{minipage}
\end{figure}

Fig.\ref{Lozi} displays the strange attractor of the map. Fig.\ref{lozi4} displays the 6 and  8 cycles in the Lozi map found by the averaging method with $N=6.$  

\begin{figure}[!htbp]
\centering
%\hspace{.1cm}
\begin{minipage}[b]{0.75\linewidth}
\centerline{\includegraphics[scale=0.35]{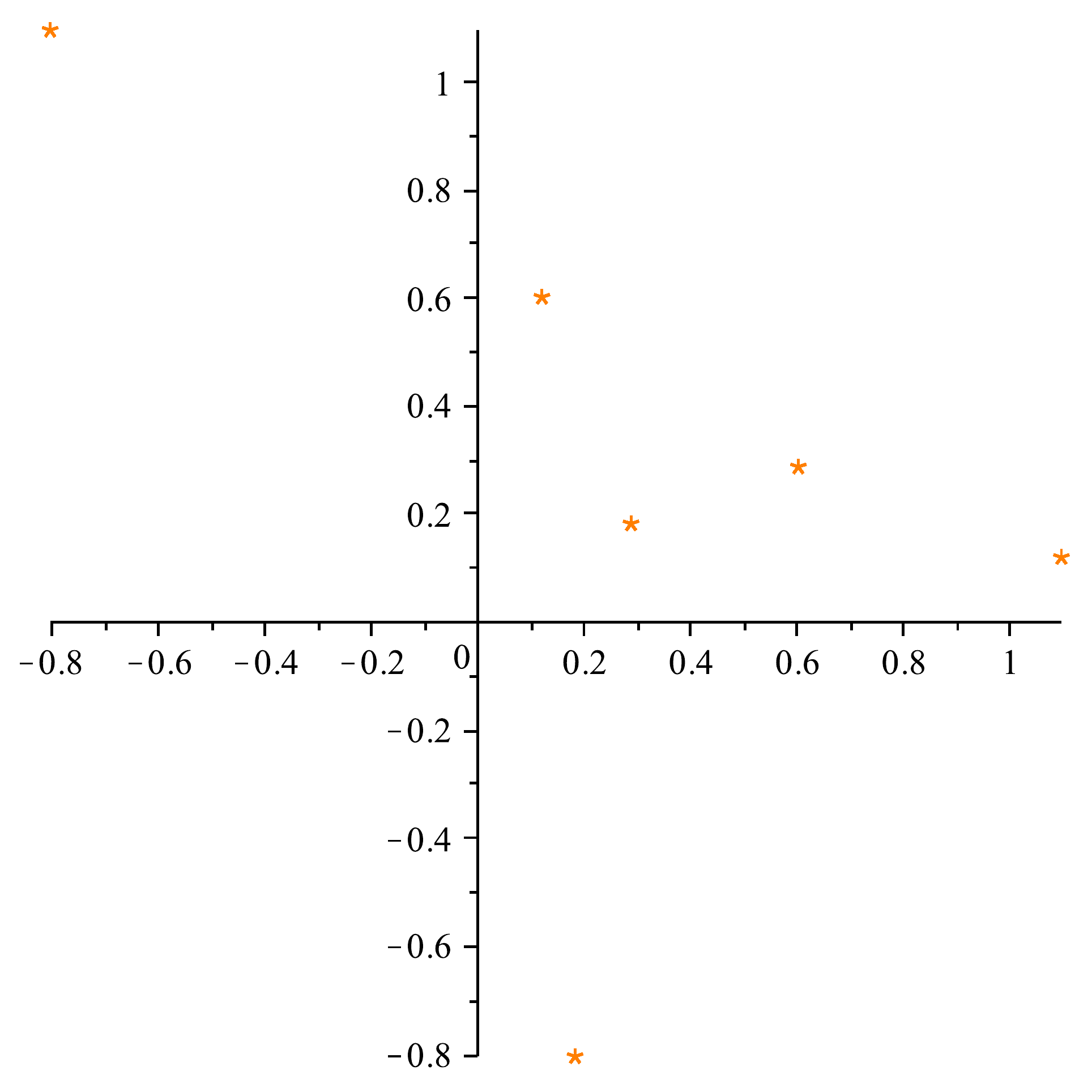}
\includegraphics[scale=0.35]{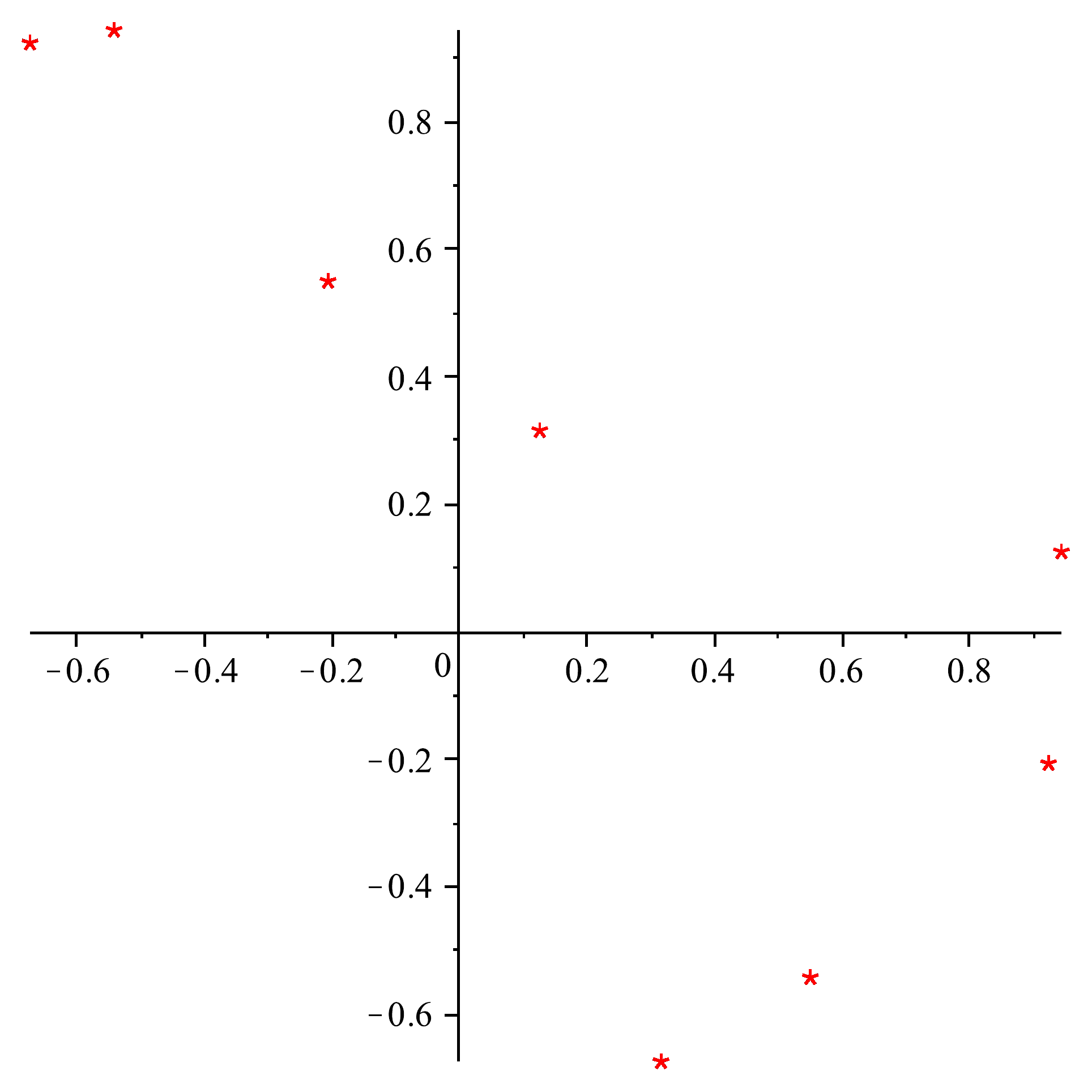}
}
%\vspace{-8cm}
\caption{6 and 8 cycles in the Lozi map}
\label{lozi4}
\end{minipage}
\end{figure}

\newpage

\subsection{Ikeda map}  The Ikeda map is given by
$$
\begin{cases}
x_{n+1}=&1+0.9\left(x_n\cos\left(0.4-\frac6{1+x_n^2+y_n^2}\right) - y_n \sin\left(0.4-\frac6{1+x_n^2+y_n^2}\right) \right),
\\
y_{n+1}=& 0.9\left(x_n\sin\left(0.4-\frac6{1+x_n^2+y_n^2}\right) + y_n \cos\left(0.4-\frac6{1+x_n^2+y_n^2}\right) \right),
\end{cases}
$$
Figure \ref{ikeda5} displays the initial Ikeda map and the 5-cycle found by averaging the initial system starting from n=400 to n=7800 and $N=6.$ 

\begin{figure}[!htbp]
\centering
%\hspace{.1cm}
\begin{minipage}[b]{0.75\linewidth}
\centerline{\includegraphics[scale=0.35]{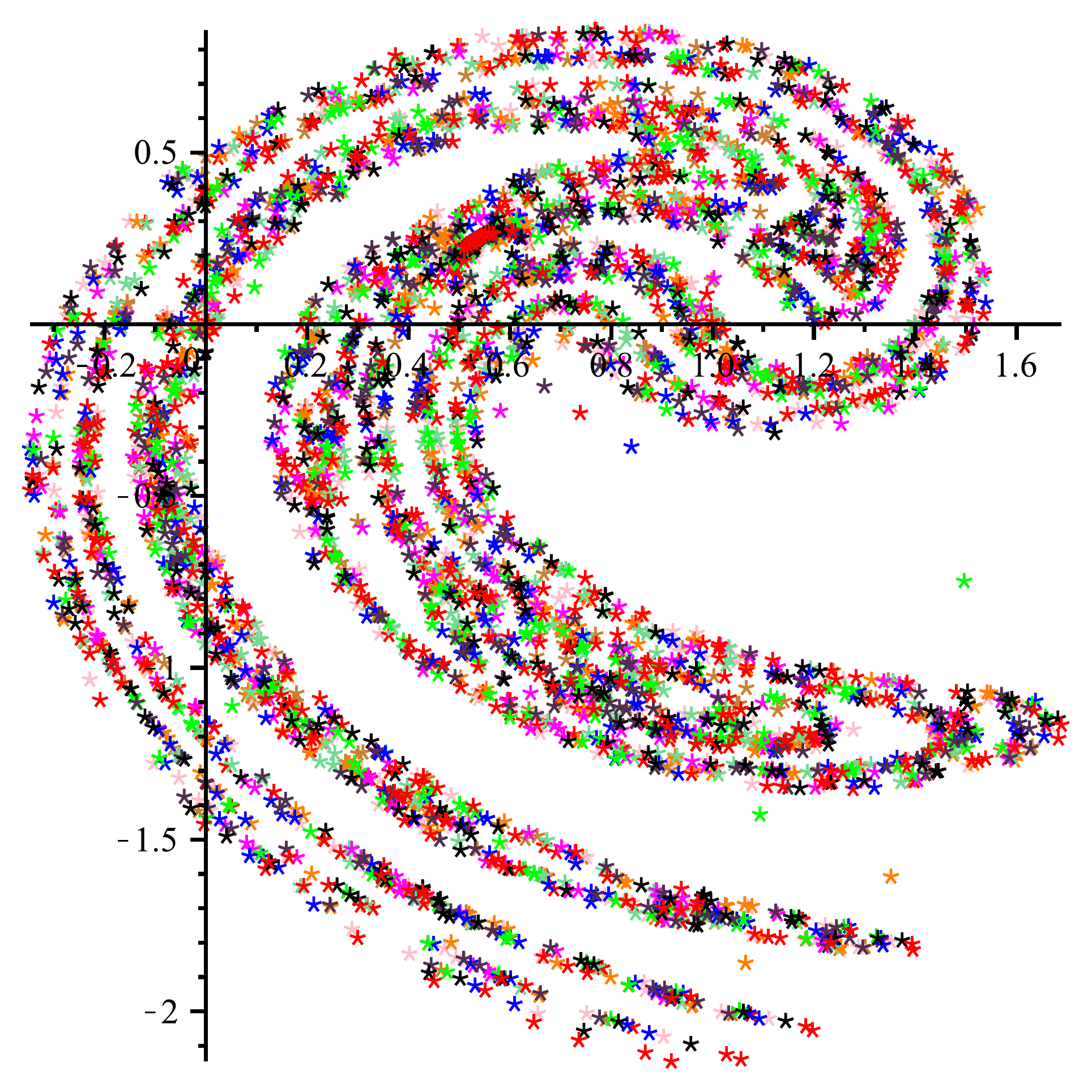}
\includegraphics[scale=0.35]{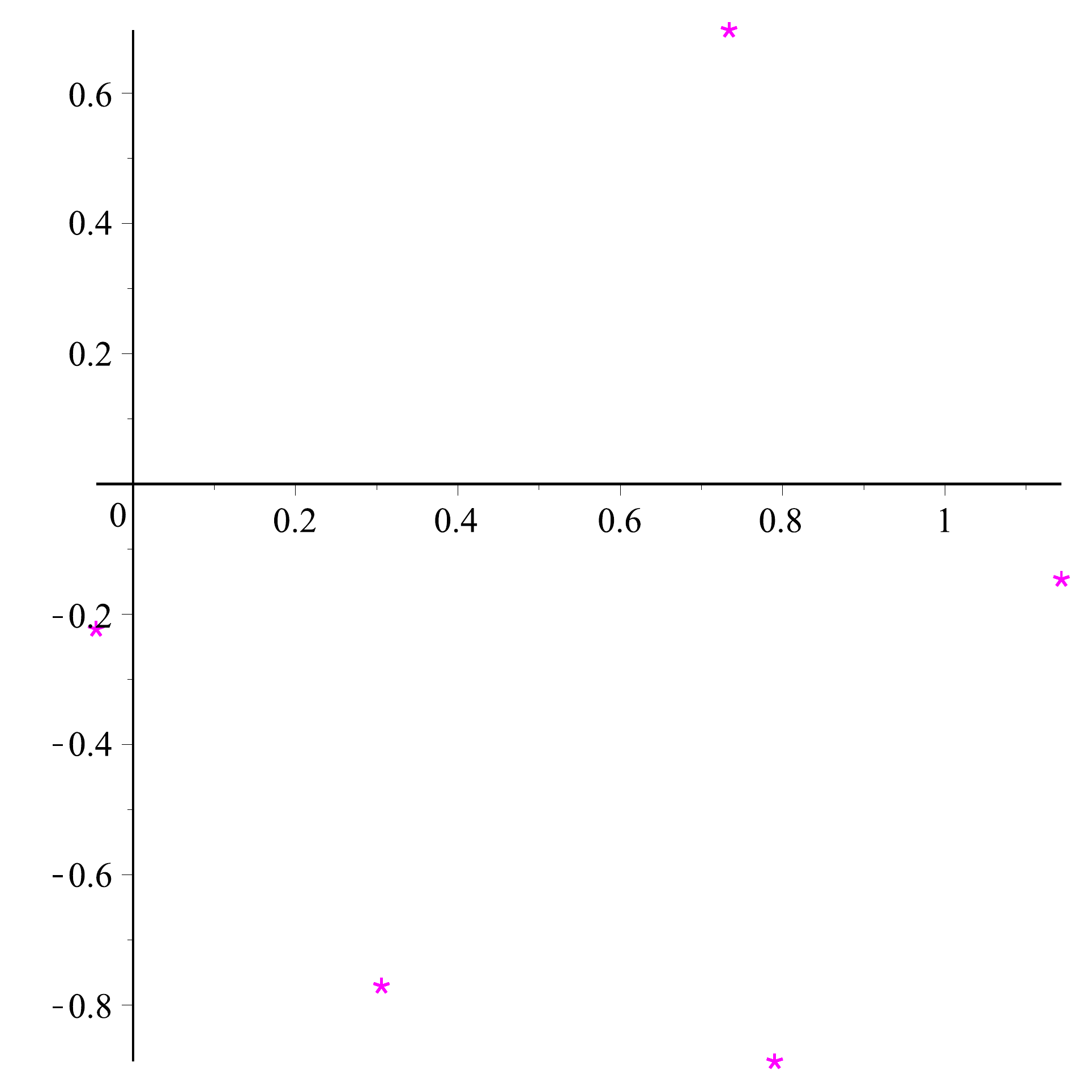}}
%\vspace{-8cm}
\caption{5-cycle in the Ikeda Map}
\label{ikeda5}
\end{minipage}
\end{figure}

%\newpage

\subsection{Elhaj-Sprott map}

The Elhaj-Sprott map is given by the system
$$
\begin{cases}
x_{n+1}=&1-4\sin(x_n)+0.9y_n	\\
y_{n+1}=& x_n.	
\end{cases}
$$
After averaging with  parameters $N= 2$, $T=4$ we find a 4-cycle after 3900 iterations
\begin{figure}[!htbp]
\centering
%\hspace{.1cm}
\begin{minipage}[b]{0.75\linewidth}
\centerline{\includegraphics[scale=0.35]{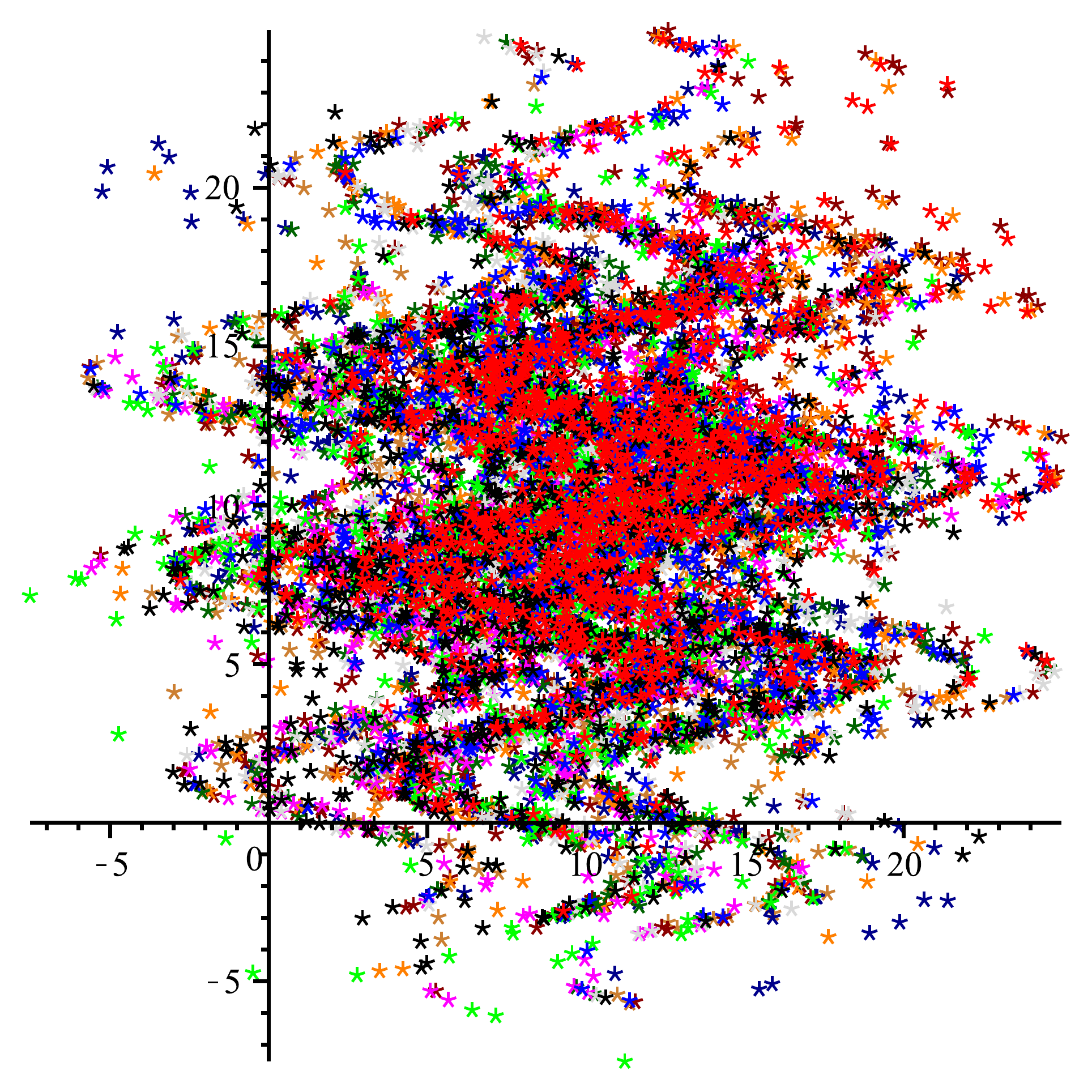}
\includegraphics[scale=0.35]{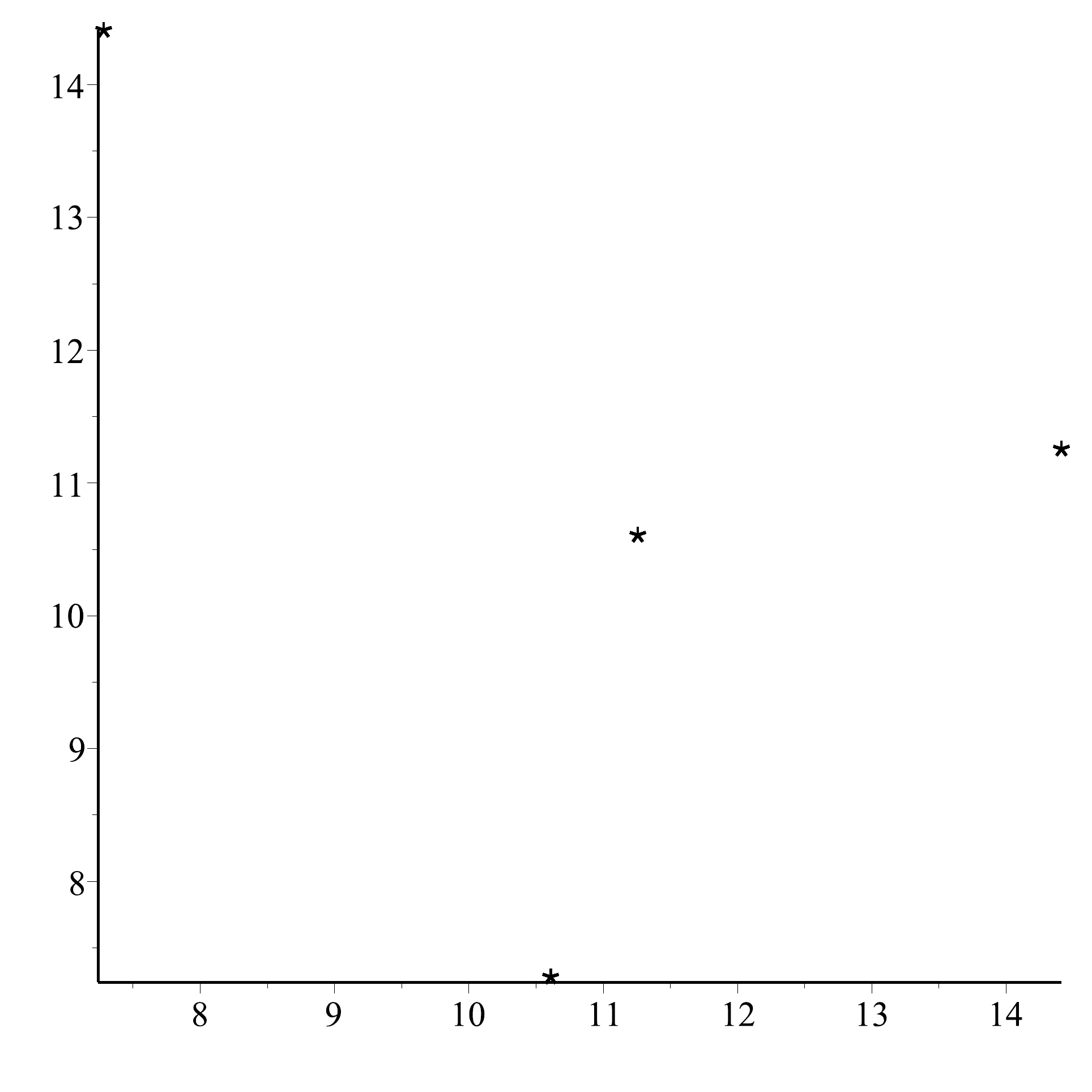}}
%\vspace{-8cm}
\caption{4-cycle in the Elhaj-Sprott map}
\label{elsp}
\end{minipage}
\end{figure}

%\newpage

\subsection{Holmes cubic map}
The Holmes cubic map is given by the equations
$$
\begin{cases}
x_{n+1}=&y_n	\\
y_{n+1}=&0.2x_n+2.77y_n-y_n^3.	
\end{cases}
$$
Below we average with parameters $N= 2, T=2.$ The 2-cycle is detected after 5700 iterations

\begin{figure}[!htbp]
\centering
%\hspace{.1cm}
\begin{minipage}[b]{0.75\linewidth}
\centerline{\includegraphics[scale=0.65]{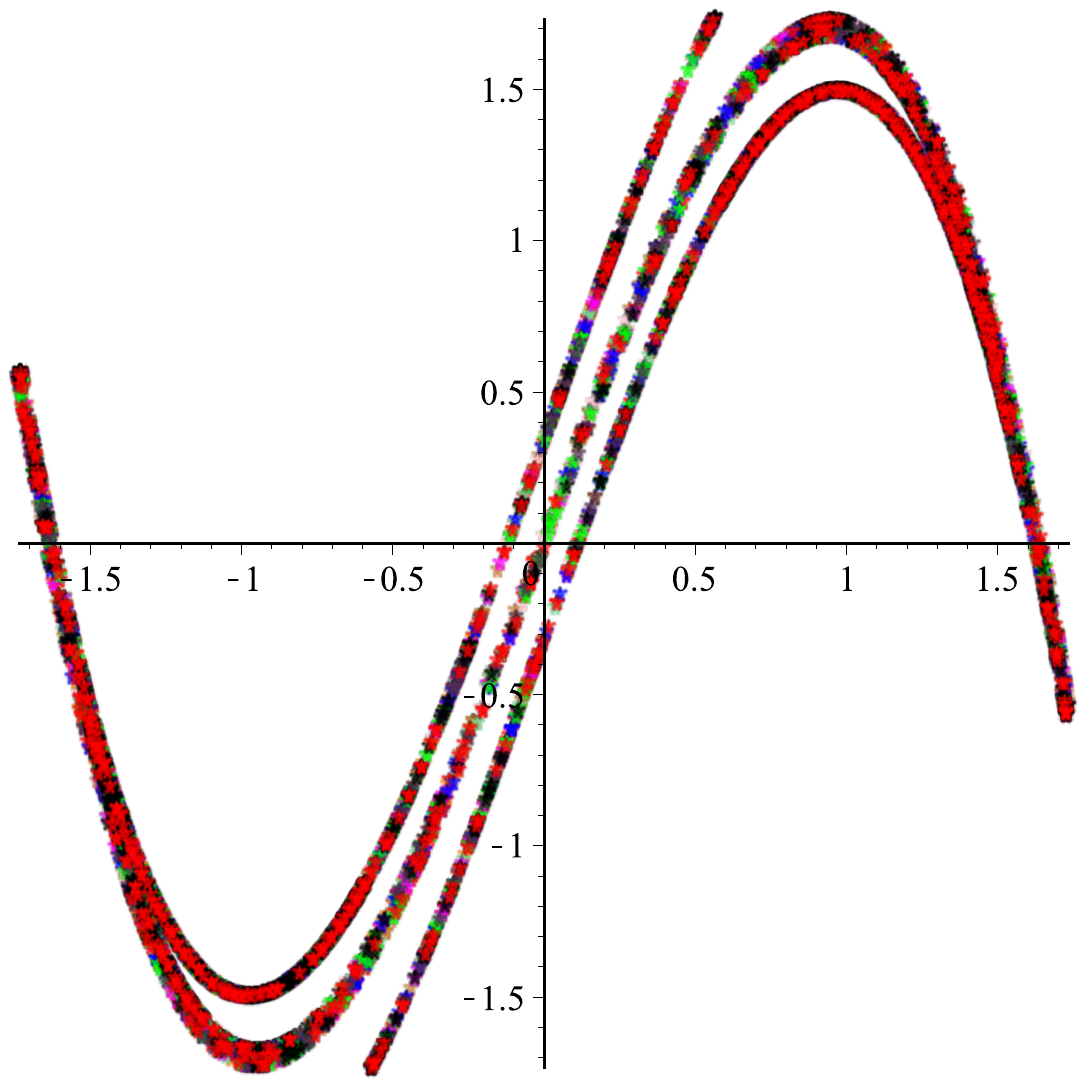}
\includegraphics[scale=0.35]{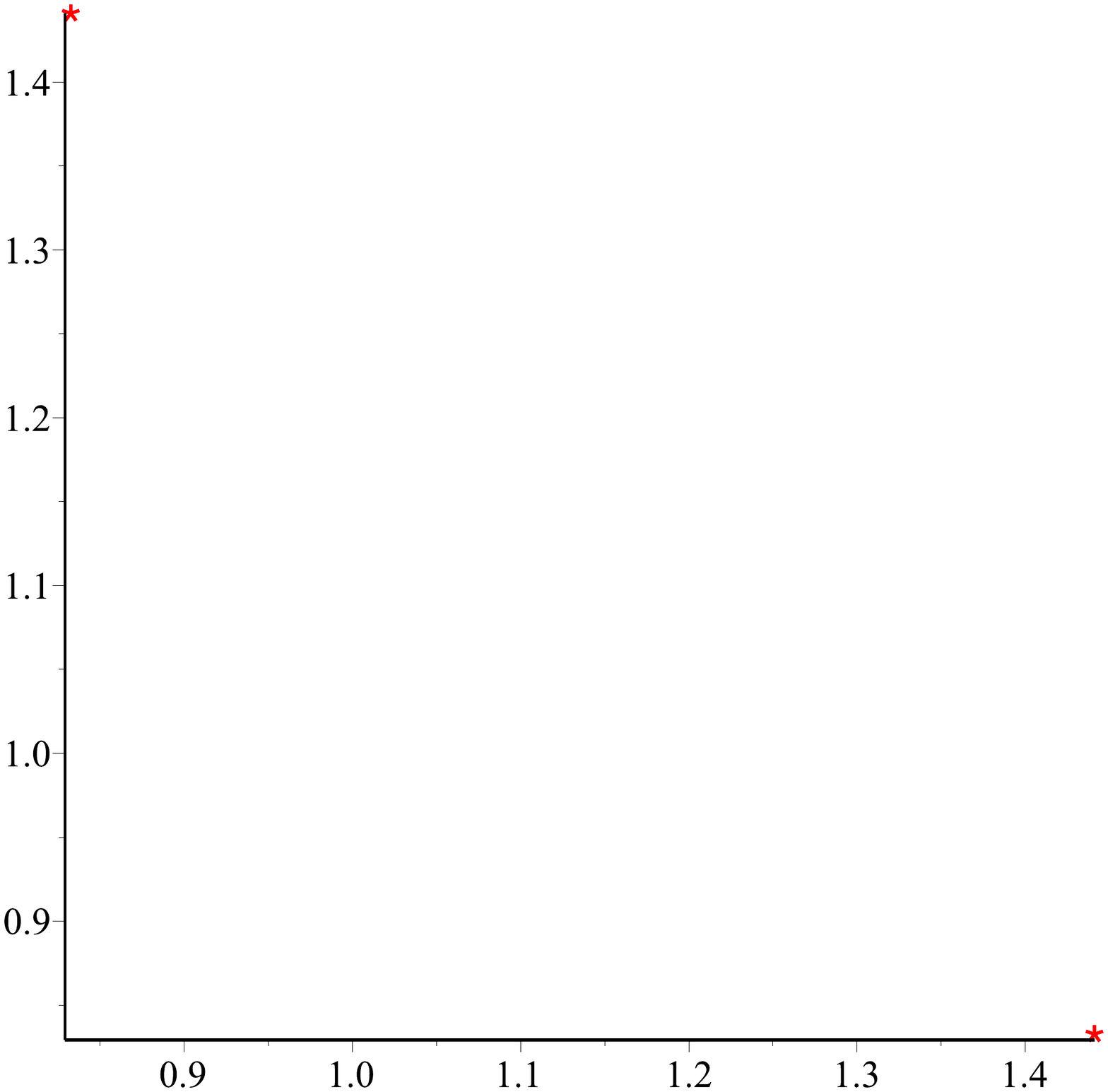}}
%\vspace{-8cm}
\caption{2-cycle in Holmes cubic map}
\label{holmes2}
\end{minipage}
\end{figure}

\section*{Acknowledgments}
M.T. is supported in part by the NSF grant DMS--1636435.
%\newpage

\end{document}